\documentclass[11pt]{article}
\pdfoutput=1
\usepackage[a4paper,margin=1in]{geometry}
\usepackage[USenglish]{babel}
\usepackage{xspace}
\usepackage[T1]{fontenc}
\usepackage{amsmath,amssymb,amsthm}
\usepackage{graphicx}
\usepackage{tikz}
\usepackage[table]{xcolor}
\usepackage[bibliography=common]{apxproof}
\usepackage{subcaption}
\usetikzlibrary{arrows.meta}
\usepackage{todonotes}
\usepackage{csquotes}
\usepackage{comment}
\usepackage{url}
\usepackage[hidelinks,pdfusetitle]{hyperref}
\usepackage[capitalise]{cleveref}
\usetikzlibrary{matrix,decorations,arrows,shapes}

\title{Algorithms and Indexing Lower Bounds for Variable String Matching}

\author{
Estéban Gabory\thanks{Supported by the Polish National Science Centre grant number 2023/51/B/ST6/01505.}\\
Institute of Computer Science, University of Wrocław, Wrocław, Poland
}

\date{}

\theoremstyle{plain}
\newtheorem{theorem}{Theorem}
\newtheorem{lemma}[theorem]{Lemma}
\newtheorem{corollary}[theorem]{Corollary}
\newtheorem{conjecture}[theorem]{Conjecture}

\theoremstyle{definition}
\newtheorem{example}[theorem]{Example}

\newcommand{\cO}{\mathcal{O}}
\newcommand{\tcO}{\tilde{\mathcal{O}}}

\newcommand{\T}{\mathcal{T}}

\newcommand{\E}{\mathcal{E}}

\newcommand{\occ}{\textbf{Occ}}

\newcommand{\str}[1]{\mathtt{#1}}

\newcommand{\gd}{\ensuremath{GD}\xspace}

\newcommand{\ed}{\ensuremath{ED}\xspace}

\newcommand{\length}[1]{|#1|}
\newcommand{\size}[1]{||#1||}

\newcommand{\probdef}[3]{
\vspace{2mm}
\noindent\fbox{
   \begin{minipage}{0.96\textwidth}
   \textsc{#1}\\
   {\bf{Input:}} #2  \\
   {\bf{Question:}} #3
   \end{minipage}
   }
   \vspace{2mm}
}

\newcommand{\indexdef}[3]{
\vspace{2mm}
\noindent\fbox{
   \begin{minipage}{0.96\textwidth}
   \textsc{#1}\\
   {\bf{Index:}} #2  \\
   {\bf{Queries:}} #3
   \end{minipage}
   }
   \vspace{2mm}
}
\usepackage[
    backend=biber,
    style=numeric
]{biblatex}

\begin{document}
\maketitle

\begin{abstract}
A \emph{generalized degenerate string} (\gd) is a sequence $T=T_1\cdots T_n$ of nonempty finite sets of strings (called \emph{segments}) such that all strings within each segment have the same length, called the \emph{width} of the segment. We study pattern matching in generalized degenerate strings: given a solid string $P$ and a \gd string $T$, decide whether $P$ occurs in $T$, that is, whether $P$ is a substring of a string obtained by concatenating one string from each segment of $T$. This problem was identified by Ascone et al. (WABI 2024) as the main remaining open case in the fine-grained complexity of pattern matching on \emph{variable strings}, a family of string representations for sets of similar strings with different levels of generality. It is the remaining boundary case between the variants admitting near-linear-time algorithms and those with quadratic conditional lower bounds under SETH. In this paper, we show a $\tcO(N\sqrt{m})$-time algorithm for \gd string matching, where $N$ is the total size of $T$ and $m=|P|$, placing the problem on the subquadratic side of this boundary.

We complement this with upper and lower bounds for the indexed setting. For elastic-degenerate strings (\ed), where the equal-width condition is dropped, Gibney (SPIRE 2020) showed that one can answer pattern matching queries in $\cO(nm^2)$ time where $m$ is the length of the pattern and $n$ is the number of segments of the \ed string, after linear-time preprocessing. We show that this index can be adapted to \gd strings to answer queries in $\cO(nm)$ time. We observe that an existing SETH-based reduction from the same paper already applies to \gd strings and rules out indices with polynomial-time preprocessing and with query time $\cO(n^{1-\varepsilon}m^{\cO(1)}+m)$ for any $\varepsilon>0$. We then prove that, under the $k$-clique conjecture, no combinatorial index with polynomial preprocessing time can answer queries in $\cO(n^{\cO(1)}m^{1-\varepsilon}+m)$ time for any $\varepsilon>0$. For \ed strings, we prove under the same conjecture that no combinatorial index with polynomial preprocessing can answer queries in $\cO(n^{\cO(1)}m^{2-\varepsilon})$ time for any $\varepsilon>0$, matching the quadratic dependence on $m$ in Gibney’s upper bound. Finally, assuming the OMv conjecture, we show that there is no data structure that, after polynomial-time preprocessing of a set of strings and a pattern, answers active-prefix queries on a bit vector of length $m$ in $\cO(m^{2-\varepsilon})$ time for any $\varepsilon>0$; such queries are the standard bottleneck in elastic-degenerate pattern matching. Taken together, these lower bounds suggest that improving the query time for indexed \ed pattern matching below $\cO(n^{\cO(1)}m^2)$ would require both non-combinatorial techniques and an approach that avoids using active-prefix queries as the central bottleneck.
\end{abstract}

\section{Introduction}

\paragraph*{Motivation and Background.} 
String matching (or pattern matching) is a fundamental task in computer science. Given a text and a pattern, one asks whether the pattern occurs in the text, or asks to count or report all occurrences. This simplest setting can be solved in linear time~\cite{crochemoreAlgorithms2007}. This line of research has since been extended to more complex settings, such as approximate matching~\cite{landauefficient1986, landauFast1988, coleDictionary2004, gawrychowskiUnified2018, charalampopoulosFaster2020}, or pattern matching on different data structures~\cite{iliopoulosTruly2016, shiftanSet2016, iliopoulosEfficient2016, grossiOnline2017, aoyamaFaster2018, bernardiniEven2019}. 

A natural generalization is to replace each position of a string by a set (called a \emph{segment}), representing the possible characters at that position. This leads to several data structures with different levels of generality, from \emph{degenerate strings} (or \emph{indeterminate strings}) where the segments are subsets of the alphabet, to \emph{elastic-degenerate strings} where segments can be any finite set of strings (see Figure~\ref{fig:variable_strings}). Ascone et al.~\cite{asconeUnifying2024} studied the fine-grained complexity of pattern matching in each of those variants (collectively called \emph{variable strings}) and showed a clear dichotomy between the cases where a solid pattern can be matched in near-linear time and cases admitting quadratic conditional lower bounds under SETH. More precisely, they gave a near-linear algorithm for pattern matching in a $k$-degenerate string, where each segment is a set of length-$k$ strings, for a fixed integer $k$. For the lower bounds, Backurs and Indyk~\cite{backursWhich2016} showed that a subquadratic algorithm for pattern matching in an elastic-degenerate string would contradict SETH. The intermediate case of \emph{generalized degenerate strings} (\gd), in which the strings within each segment have a common length that may vary between segments, was left as an open boundary case, with an $\cO(N+nm)$-time algorithm, where $N$ is the total size of the text, $n$ is the number of segments, and $m$ is the length of the pattern. 

Generalized degenerate strings were first introduced by Alzamel et al.~\cite{alzamelDegenerate2018a} as a restriction of elastic-degenerate strings. Importantly, they showed that one can solve the intersection problem (namely, assessing whether two given variable strings have a common string in their languages) in linear time for \gd strings. For \ed strings, subquadratic conditional lower bounds and matching upper bounds were shown~\cite{gaboryComparing2023}. It is therefore natural to ask whether the same separation holds for pattern matching. 

Recently, Equi et al.~\cite{equiQuantum2026} gave a subquadratic quantum algorithm for pattern matching in a \gd string, running in $\tcO(\sqrt{mnN})$, but the problem remained open for classical algorithms. In this paper, we answer this question by giving a classical algorithm solving pattern matching in \gd strings in $\tcO(N\sqrt{m})$ time.

Another classical task is \emph{text indexing}, where a text can be preprocessed to answer pattern matching queries faster than the time needed to solve the problem from scratch. In the case of variable strings, a natural goal is to obtain a query time that does not depend on $N$, but instead on the number of segments $n$ and on the pattern length $m$. For \ed strings, the main results were given by Gibney~\cite{gibneyEfficient2020}, who showed the existence of a data structure answering queries in $\cO(nm^2)$ after linear-time preprocessing, and claimed conditional lower bounds under SETH, showing that no index can answer queries in $\cO(n^{\alpha}m^{\beta}+m)$ time with $\alpha<1$ or $\beta<1$. %

\begin{figure}[t]
\centering
\captionsetup[subfigure]{font=small}
\newcommand{\introSet}[1]{\left\{\begin{array}{c}#1\end{array}\right\}}

\begin{subfigure}[t]{0.45\textwidth}
\centering
\(\introSet{\str{A}\\ \str{C}}
 \cdot \introSet{\str{G}}
 \cdot \introSet{\str{T}\\ \str{A}}\)
\caption{}
\label{fig:deg}
\end{subfigure}\hfill
\begin{subfigure}[t]{0.45\textwidth}
\centering
\(\introSet{\str{AC}\\ \str{GT}}
 \cdot \introSet{\str{TG}\\ \str{CA}}
 \cdot \introSet{\str{AA}\\ \str{GC}}\)
\caption{}
\label{fig:kdeg}
\end{subfigure}

\vspace{0.8em}

\begin{subfigure}[t]{0.45\textwidth}
\centering
\(\introSet{\str{AC}\\ \str{GT}}
 \cdot \introSet{\str{A}\\ \str{C}}
 \cdot \introSet{\str{TGA}\\ \str{CAG}}\)
\caption{}\label{fig:gd}
\end{subfigure}\hfill
\begin{subfigure}[t]{0.45\textwidth}
\centering
\(\introSet{\str{AC}\\ \str{G}}
 \cdot \introSet{\varepsilon\\ \str{TGA}}
 \cdot \introSet{\str{C}\\ \str{AGT}}\)
\caption{}
\label{fig:eds}
\end{subfigure}

\caption{Four types of variable strings: \subref{fig:deg}~degenerate strings have width $1$ in every segment; \subref{fig:kdeg}~$k$-degenerate strings have one fixed width $k$; \subref{fig:gd}~\gd strings allow the width to vary across segments; \subref{fig:eds}~\ed strings also allow different string lengths inside a segment.}
\label{fig:variable_strings}
\end{figure}

\paragraph*{Contributions.} We show that pattern matching in a \gd string can be solved in $\tcO(N\sqrt{m})$ time, where $N$ is the total size of the text and $m$ is the length of the pattern. This answers the open question left by Ascone et al.~\cite{asconeUnifying2024}, and places \gd strings on the subquadratic side of the boundary between cases admitting near-linear algorithms and cases with quadratic lower bounds under SETH. The algorithm computes, for each column of the text, the number of segments matched by the pattern when aligned at that column. This array is computed by a heavy-light decomposition of the strings appearing in the segments, according to how often they occur in the pattern. For heavy strings, we use convolution to compute their contribution to the array. For light strings, we use standard pattern matching, but instead of extending candidate matches to the right, as in the usual approach for \ed strings~\cite{bernardiniEven2019}, we scan the strings in the segments and add their contribution directly to the corresponding candidate starting positions.

For indexing, we first observe that one can slightly adapt the index introduced by Gibney~\cite{gibneyEfficient2020} for \ed strings to \gd strings, and obtain an index with linear-time preprocessing and $\cO(nm)$ query time. We then observe that an existing SETH-based reduction from the same paper already applies to \gd strings, showing that there is no index with polynomial preprocessing that can answer queries in $\cO(n^{\alpha}m^{\beta}+m)$ time with $\alpha<1$. However, we notice a potential flaw in the lower bound forbidding $\beta<1$ in the same paper, which seems to suggest that the question of whether one can obtain an index for \ed strings with $\cO(n^{\cO(1)}m^{1-\varepsilon}+m)$ query time after polynomial-time preprocessing for some $\varepsilon>0$ is still open. To partially address this gap, we give two alternative lower bounds, showing that, under the $k$-clique conjecture, no \emph{combinatorial} index with polynomial preprocessing can answer queries in $\cO(n^{\alpha}m^{\beta}+m)$ time with $\beta<1$ for \gd strings, and $\beta<2$ for \ed strings. We then consider the \emph{active-prefix} problem, which is often considered the main bottleneck in \ed string indexing, and asks, given a set $S$ of strings of total size $N$, a pattern $P$ and a bit vector $U$ both of length $m$, to compute all positions $j$ such that $P[i+1\dots j]\in S$ and such that $U[i]=1$ for some $i \le j$. We show that, under the OMv conjecture~\cite{henzingerUnifying2015}, one cannot index a set and a pattern in polynomial time to answer active-prefix queries with input bit vectors of length $m$ in $\cO(m^{2-\varepsilon})$ time for any $\varepsilon>0$. The proof is based on a reduction from~\cite{bernardiniEven2019}. Taken together, these lower bounds suggest that improving the query time for indexed \ed pattern matching below $\cO(n^{\cO(1)}m^2)$ would require both non-combinatorial techniques and an approach that avoids using active-prefix queries as the central bottleneck.

\section{Preliminaries}
\subsection{Strings and elastic-degenerate strings}

A \emph{string} (or \emph{solid string}) $S$ over an alphabet $\Sigma$ is a finite sequence of characters from $\Sigma$. The \emph{length} of $S$ is denoted by $|S|$, and the \emph{empty string} is denoted by $\varepsilon$. We write $S[i\dots j]$ for the \emph{substring} of a string $S$ between positions $i$ and $j$. In this case, we say that the substring $S[i\dots j]$ \emph{occurs} in $S$.

An \emph{elastic-degenerate (\ed) string} $T$ over an alphabet $\Sigma$ is a sequence $T=T_{1} \cdots T_{n}$ of $n$ nonempty finite sets, called \emph{segments}, where each $T_{i}$ is a subset of $\Sigma^*$. We denote the \emph{length} of $T$ by $\length{T}=n$, and the \emph{size} $\size{T}=N$ is the total number of characters in $T$, that is,
$$N=N_{\varepsilon}+\sum^{n}_{i=1}\sum_{S\in T_{i}} |S|,$$
where $N_{\varepsilon}$ is the total number of empty strings in the segments of $T$. The \emph{cardinality} $B$ of $T$ is defined as $B=\sum_{i=1}^n |T_{i}|$. For any $1\leq i\leq j\leq n$, $T[i\dots j]$ denotes the \ed string $T_{i}\cdots T_{j}$, which is the fragment between segments $i$ and $j$ of $T$. Given two sets of strings $S$ and $S'$, we define the \emph{Cartesian product} of $S$ and $S'$ as the set $S \times S' = \{ss' : s\in S, s'\in S'\}$ of all concatenations of a string in $S$ with a string in $S'$. Given an \ed string $T=T_{1}\cdots T_{n}$, we define its \emph{language} $\mathcal{L}(T)$ as the Cartesian product of its segments, that is $\mathcal{L}(T)=T_{1}\times \cdots \times T_{n}$. We say that a string $P$ \emph{occurs} in $T$ if it occurs in a string $S\in\mathcal{L}(T)$. Given a set of strings $S$ and an integer $k\ge 1$, we denote by $S^k$ the \ed string of length $k$ whose segments are all equal to $S$.

If for every $i$ the strings in $T_{i}$ all have the same length $k_i\ge 1$ (called the \emph{width} of $T_{i}$), we say that $T$ is a \emph{generalized degenerate (\gd) string}. If in addition all segments $T_{i}$ have the same width $k$, $T$ is a \emph{$k$-degenerate} string ($k$-D, in short). In the special case $k=1$, $T$ is known in the literature as a \emph{degenerate} or \emph{indeterminate} string. 

Given a \gd string $T$, the \emph{span} $L$ of $T$ is defined as $L=\sum_{i=1}^n k_i$, where $k_i$ is the width of segment $T_{i}$. Note that every string $s\in \mathcal{L}(T)$ has length $L$. For each segment $T_i$, we define $L_i=\sum_{j=1}^{i-1} k_j+1$ and $R_i=L_i+k_i-1$, that is, $L_i$ and $R_i$ are the leftmost and rightmost positions of $T_i$ within $T$, respectively. 
Let us fix a solid string $P$ of length $m$. We say that $P$ \emph{occurs} in $T$ at \emph{position} (or \emph{column}) $1\le p\le L-m+1$ if there exists a string $S\in \mathcal{L}(T)$ such that $P$ is a substring of $S$ starting at position $p$.  We write $s(p)$ for the index of the segment containing position $p$, namely $s(p)=i$ if and only if $L_i \le p \le R_i$, and $t(p)=s(p+m-1)$. In particular, if $P$ occurs at position $p$ in $T$, then $P$ occurs in the \gd string $T[s(p) \dots t(p)]$, at position $p-L_{s(p)}+1$.

Given a width $k$, we write $N_k=\sum_{i,~k_i=k} ||T_i||$ for the total size of the strings of width $k$ that occur in $T$, and $B_k=\sum_{i: k_i=k} |T_i|$ for their total cardinality. We write $\mathcal{V}_k$ for the set of strings occurring in a segment of width $k$ in $T$. Note that $|\mathcal{V}_k|\le B_k$, since $B_k$ counts repetitions, and that $\sum_k N_k=N$ and $\sum_k B_k=B$. 

\begin{example}
Consider the \gd string $T=\left\{\begin{array}{c}ba\\ab\end{array}\right\}\cdot\left\{\begin{array}{c}b\\a\end{array}\right\}\cdot\left\{\begin{array}{c}aa\\ba\end{array}\right\}\cdot\left\{\begin{array}{c}a\end{array}\right\}$. It has length $n=4$, size $N=11$, and cardinality $B=7$. We have $k_1=k_3=2$, and $k_2=k_4=1$. The span of $T$ is $L=k_1+k_2+k_3+k_4=6$. Furthermore, $N_1=3$, $N_2=8$, $B_1=3$, and $B_2=4$, and $\mathcal{V}_1=\{a,b\}$, $\mathcal{V}_2=\{ab,ba,aa\}$. The string $P=aba$ occurs in $T$ at positions $1$, $2$, and $3$.
\end{example}

\section{Pattern matching}

We want to solve the following problem:

\probdef{GDSM}{A \gd string $T$, a solid string $P$}{Does $P$ occur in $T$?}

We will compute, for each candidate starting position $1\le p \le L-m+1$, the number of segments that are correctly matched by the alignment of $P$ starting at $p$. If this number is equal to the number of segments spanned by the alignment, then $P$ occurs at position $p$. In the following lemma, we show how to compute the number of matched segments that are \emph{fully covered} by the alignment, for a fixed width $k$. The other cases are handled separately. 

\begin{figure}[t]
\centering
\includegraphics[width=.9\textwidth]{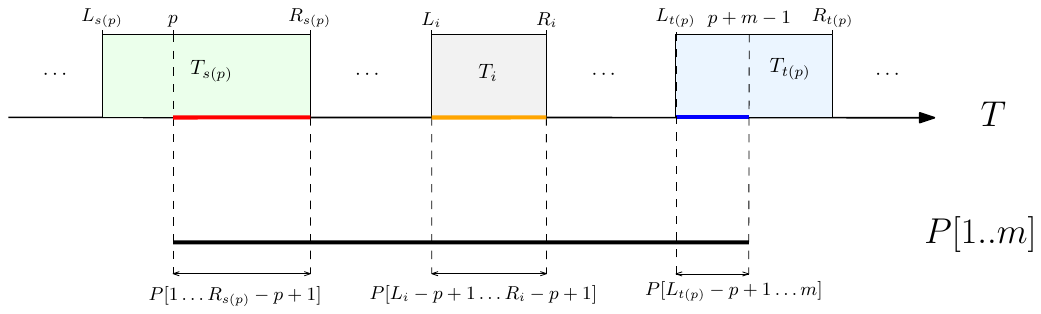}
\caption{Alignment of a candidate occurrence of $P[1\dots m]$ starting at position $p$ with the segments of $T$. If $T_i$ has width $k$, then it contributes $1$ to $A[p]$ if there exists a string $t \in T_i$ such that $t = P[L_i-p+1 \dots R_i-p+1]$. After processing all widths, $A[p]$ equals the number of fully covered segments matched by this alignment.}
\label{fig:top-alignment}
\end{figure}

\begin{lemma}\label{lem:anchor}
Let $T=T_1\dots T_n$ be a \gd string, let $P$ be a solid string, and let $\T_P$ be the suffix tree of $P$, annotated with the number of occurrences of every represented substring. Fix a width $k$ occurring in $T$, and let $A$ be an array of length $L-m+1$. In $\cO\!\left(N_k+\sqrt{LmB_k\log m}\right)$ time, one can add to every $A[p]$ the number of segments $T_i$ of width $k$ such that $p\le L_i$, $p+m-1\ge R_i$, and $P[L_i-p+1\dots R_i-p+1]\in T_i$ (see Figure~\ref{fig:top-alignment}).
\end{lemma}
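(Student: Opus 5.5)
Fix the width $k$ and write $q=L_i-p+1$ for the offset of segment $T_i$ inside the alignment. Segment $T_i$ of width $k$ is fully covered and matched exactly when $1\le q\le m-k+1$ and $P[q\dots q+k-1]\in T_i$. For each distinct string $v\in\mathcal{V}_k$ let $\mathrm{occ}(v)\subseteq\{1,\dots,m-k+1\}$ be its set of occurrence positions in $P$. The contribution to $A[p]$ then equals $\sum_{i:k_i=k}\sum_{v\in T_i}[L_i-p+1\in\mathrm{occ}(v)]$; since different strings of $T_i$ have different occurrence sets, there is no double counting. I would first locate every $v\in\mathcal{V}_k$ in $\T_P$, which also gives $|\mathrm{occ}(v)|$ from the annotation. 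To do this, process all strings of width $k$ with a Aho--Corasick-style or suffix-tree traversal; each $v$ is read once per occurrence in $T$, so this costs $\cO(N_k)$ in total. Strings that do not occur in $P$ are discarded. I would then fix a threshold $\tau$ and call $v$ \emph{heavy} if $|\mathrm{occ}(v)|>\tau$ and \emph{light} otherwise.

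\textbf{Light strings.} For every pair $(i,v)$ with $k_i=k$ and $v\in T_i$ light, enumerate $\mathrm{occ}(v)$ and increment $A[L_i-q+1]$ for each $q\in\mathrm{occ}(v)$ whose resulting index is valid. Occurrences can be listed in $\cO(1)$ time each by traversing the leaves of the locus subtree in $\T_P$; alternatively, precompute for each locus node its list of leaves, sized at most $\tau$ since only light nodes are listed. The cost is $\cO(N_k+B_k\tau)$.

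\textbf{Heavy strings.} There are at most $m/\tau$ heavy strings, because each one has more than $\tau$ starting positions among at most $m$ positions. Distinct strings of equal length can share starting positions, however, so this bound needs care. I would use that distinct strings of the \emph{same} length $k$ have disjoint occurrence sets, since a start position determines the substring. Hence $\sum_v|\mathrm{occ}(v)|\le m$, and there are at most $m/\tau$ heavy strings of width $k$. For each heavy $v$, build an indicator array $X_v$ of length $L$ with $X_v[L_i]=1$ whenever $k_i=k$ and $v\in T_i$, and an indicator $Y_v$ of $\mathrm{occ}(v)$ of length $m$. The contribution to $A[p]$ is then $\sum_q X_v[p+q-1]Y_v[q]$, a cross-correlation computable by FFT in $\cO(L\log m)$ time. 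To get $\log m$ rather than $\log L$, split $X_v$ into blocks of length $m$ overlapping with the next block. Summing over heavy strings gives $\cO((m/\tau)L\log m)$, and building the $X_v$ costs only $\cO(B_k)$ beyond array initialization.

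\textbf{Balancing.} The total cost is $\cO(N_k+B_k\tau+Lm\log m/\tau)$. Choosing $\tau=\sqrt{Lm\log m/B_k}$ gives $\cO(N_k+\sqrt{LmB_k\log m})$. If $\tau<1$, all strings become heavy; if $\tau\ge m$, all become light, and then the bound still holds since $B_k\tau\le B_k m$ is dominated appropriately. I expect the main obstacle to be the subtle points: the disjointness argument that bounds the number of heavy strings, and making sure the FFT cost is charged $\log m$ rather than $\log L$ via the blocking. Initializing the length-$L$ arrays is already inside the $Lm/\tau$ term.
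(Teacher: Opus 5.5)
Your proof is correct and follows essentially the same heavy/light argument as the paper: classification of $\mathcal{V}_k$ via $\T_P$ in $\cO(N_k)$ time, one FFT cross-correlation of length-$L$ and length-$m$ indicator arrays per heavy string, direct increments of $A[L_i-j+1]$ for light strings, and the same choice $\tau=\sqrt{Lm\log m/B_k}$. Your additions fill in details the paper leaves implicit: the disjointness of occurrence sets of equal-length strings that justifies the $m/\tau$ bound on heavy strings, the blocking that yields $\log m$ instead of $\log L$, and the boundary cases for $\tau$. The preceding sentence claiming that equal-length strings can share starting positions is a slip that contradicts your own argument and should be deleted.
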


\begin{proof}
We consider only strings in $\mathcal{V}_k$. For every $t\in \mathcal{V}_k$, we say that $t$ is \emph{heavy} if $t$ occurs at least $\tau_k$ times in $P$, where $\tau_k$ is a parameter to be fixed later. Otherwise, $t$ is \emph{light}. Using the suffix tree $\T_P$ given as input, classifying the strings as heavy or light takes $\cO(N_k)$ time.

For each heavy string $s$ of length $k$, we construct a binary string $T_s$ of length $L$ by setting $T_s[p]=1$ if $p=L_i$ for some segment $T_i$ of width $k$ containing $s$, and $T_s[p]=0$ otherwise. We also construct a binary string $P_s$ of length $m$ by setting $P_s[j]=1$ if an occurrence of $s$ starts at position $j$ in $P$, and $P_s[j]=0$ otherwise. We then compute, for every starting position $p\in[1,L-m+1]$, the convolution
\[
H_s[p]=\sum_{j=1}^{m-k+1} P_s[j]\cdot T_s[p+j-1]
\]
in $\cO(L\log m)$ time per heavy string via FFT, and add $H_s[p]$ directly to $A[p]$. Since there are at most $m/\tau_k$ heavy strings of length $k$, this takes $\cO((mL/\tau_k)\log m)$ time.

To count the matches with light strings, we read every length-$k$ light string from $T$ in $\T_P$ and add $1$ directly to $A[L_i-j+1]$ for every occurrence of that string starting at position $j$ in $P$ and every width-$k$ segment $T_i$ containing it (see Figure~\ref{fig:light-strings}). This takes $\cO(N_k+B_k\tau_k)$ time, since we read at most $B_k$ strings, and each light string occurs fewer than $\tau_k$ times in $P$.

The total running time is therefore $\cO\left(N_k+\frac{mL}{\tau_k}\log m+B_k\tau_k\right)$. Setting $\tau_k=\sqrt{Lm\log m/B_k}$, we obtain $\cO\left(N_k+\sqrt{LmB_k\log m}\right)$.
\end{proof}

\begin{figure}
\centering
\includegraphics[width=.9\textwidth]{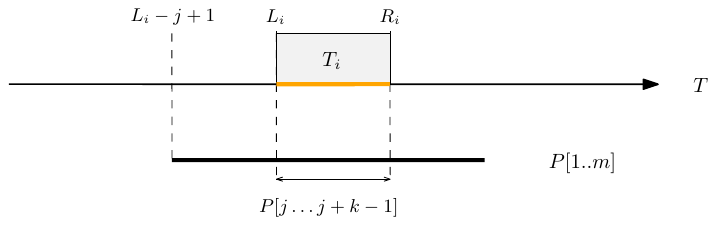}
\caption{For the light strings, when reading a string $t$ in $T_i$ that matches a substring of $P$ starting at position $j$, we add $1$ to $A[L_{i}-j+1]$.}
\label{fig:light-strings}
\end{figure}

\begin{theorem}\label{main}
We can solve GDSM in $\cO(N+\sqrt{NLm\log L\log m})=\tcO(N\sqrt{m})$ time.
\end{theorem}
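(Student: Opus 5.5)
The plan is to compute, for every candidate column $p\in[1,L-m+1]$, a counter $A[p]$ equal to the number of segments $T_{s(p)},\dots,T_{t(p)}$ that are correctly matched by the alignment of $P$ at $p$. We then report an occurrence iff $A[p]=t(p)-s(p)+1$ for some $p$. The segments touched by an alignment split into three kinds. \emph{Fully covered} segments satisfy $p\le L_i$ and $R_i\le p+m-1$. The two \emph{boundary} segments are $T_{s(p)}$, when $p>L_{s(p)}$, and $T_{t(p)}$, when $p+m-1<R_{t(p)}$. The fully covered contributions are handled by Lemma~\ref{lem:anchor}. We first build the suffix tree $\T_P$ with occurrence counts in $\cO(m)$ time (for integer alphabets, after sorting/renaming characters in $\cO(N\log N)$ if needed, absorbed in the bound). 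We then call Lemma~\ref{lem:anchor} once for every distinct width $k$. Summing, the total cost is
\[
\sum_k \cO\!\left(N_k+\sqrt{LmB_k\log m}\right).
\]
The main obstacle is bounding this sum, since the number of distinct widths can be large.

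For the sum I will use Cauchy--Schwarz over the set $K$ of distinct widths: $\sum_k\sqrt{B_k}\le\sqrt{|K|\cdot B}$. Each width $k\in K$ has $B_k\ge 1$, and $N_k\ge k B_k \ge k$. Hence $\sum_{k\in K}k\le N$, which gives $|K|=\cO(\sqrt N)$, and $B\le N$. This only yields $\sqrt{Lm\log m}\cdot N^{3/4}$, which is too weak. A sharper grouping is needed: weight by $k$. Writing $\sqrt{B_k}=\sqrt{kB_k}/\sqrt{k}$ and applying Cauchy--Schwarz gives $\sum_k\sqrt{B_k}\le\sqrt{\sum_k kB_k}\sqrt{\sum_{k\in K}1/k}\le\sqrt{N}\sqrt{H_L}=\cO(\sqrt{N\log L})$. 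Here $\sum_k kB_k=\sum_k N_k=N$, and the widths are distinct integers at most $L$. This produces exactly the claimed $\cO(N+\sqrt{NLm\log L\log m})$, which explains the $\log L$ factor. Since $L\le N$, this is $\tcO(N\sqrt m)$.

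The boundary segments are handled in $\cO(N)$ total time as follows. For the left boundary with $i=s(p)$, we need some $t\in T_i$ whose suffix of length $R_i-p+1$ equals the prefix $P[1\dots R_i-p+1]$ (or, if $t(p)=s(p)$, an inner substring; see below). For each segment $T_i$, we compute for every $t\in T_i$ the set of lengths $\ell$ such that $t$'s length-$\ell$ suffix is a prefix of $P$. This is the standard computation via the KMP/Aho--Corasick automaton of $P$ fed with $t$, or via the failure chain of the final state. We mark these lengths in a bit array of size $k_i$, which costs $\cO(\|T_i\|)$ total. Symmetrically, we compute for each $t$ and each $\ell$ whether the length-$\ell$ prefix of $t$ is a suffix of $P$, using the reversed pattern. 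Each column $p$ then looks up $\cO(1)$ bits and increments $A[p]$ accordingly.

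The remaining case is $s(p)=t(p)$, where $P$ lies strictly inside one segment. This amounts to ordinary pattern matching of $P$ in each string of width $k_i\ge m$, which takes $\cO(N)$ time overall with KMP. Because every segment is nonempty and equal-width, these counts are additive and correct, so the whole algorithm reduces to a single final scan of $A$.
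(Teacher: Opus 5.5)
Your proposal is correct and follows essentially the same route as the paper: the same easy/suffix/prefix/anchor decomposition, Lemma~\ref{lem:anchor} applied once per width, and the identical weighted Cauchy--Schwarz bound $\sum_k\sqrt{B_k}\le\sqrt{\sum_k kB_k}\,\sqrt{\sum_k 1/k}=\cO(\sqrt{N\log L})$. The differences are cosmetic: you use a single counter compared against $t(p)-s(p)+1$ instead of separate bit vectors, and KMP failure chains instead of the suffix tree for the boundary segments. One bookkeeping point: columns with $s(p)=t(p)$ should be decided solely by the easy-case check, because when $k_i=m$ and $P$ is aligned exactly on $T_i$, Lemma~\ref{lem:anchor} already adds $1$ and that segment must not be counted twice.
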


\begin{proof}
If $m>L$, there is no occurrence, so assume $m\le L$. In line with previous work on pattern matching in variable strings~\cite{grossiOnline2017,aoyamaFaster2018,bernardiniEven2019,asconeUnifying2024}, we consider the following four subproblems: (i) If $k_i\ge m$, does $P$ occur in a string from $T_{i}$? (\emph{easy case}) (ii) Compute every suffix of a string $t\in T_{i}$ that matches a prefix of $P$ (\emph{suffix case}); (iii) Compute every prefix of a string $t\in T_{i}$ that matches a suffix of $P$ (\emph{prefix case}); (iv) Find the strings $t\in T_{i}$ that are substrings of $P$ (\emph{anchor case}). We start by constructing $\T_P$, the suffix tree of $P$, in $\cO(m)$ time.

We construct bit vectors $V_{\mathrm{easy}}$, $V_{\mathrm{pref}}$, $V_{\mathrm{suf}}$, and $V_{\mathrm{anchor}}$ of length $L-m+1$, where each bit vector marks the starting positions that are compatible with the corresponding case. Intuitively, $V_{\mathrm{easy}}$ marks the occurrences fully contained in one segment, $V_{\mathrm{suf}}$ checks the left boundary segment of an occurrence spanning at least two segments, $V_{\mathrm{pref}}$ checks the right boundary segment, and $V_{\mathrm{anchor}}$ checks all segments fully contained in the occurrence. More formally (see Figure~\ref{fig:top-alignment}), let $s(p)$ be the index of the segment containing position $p$, and let $t(p)=s(p+m-1)$. We define $V_{\mathrm{easy}}[p]=1$ if and only if there exists a string in $T_{s(p)}$ in which $P$ occurs at position $p-L_{s(p)}+1$. Next, $V_{\mathrm{suf}}[p]=1$ if and only if $s(p)<t(p)$ and there exists a string $t\in T_{s(p)}$ having
$P[1 \dots R_{s(p)}-p+1]$ as a suffix.
Similarly, $V_{\mathrm{pref}}[p]=1$ if and only if $s(p)<t(p)$ and there exists a string $t\in T_{t(p)}$ having
$P[L_{t(p)}-p+1 \dots m]$ as a prefix.
Finally, $V_{\mathrm{anchor}}[p]=1$ if and only if, for every segment $i$ such that $p\le L_{i}$ and $p+m-1\ge R_i$, there exists a string $t\in T_i$ satisfying
$t=P[L_{i}-p+1 \dots R_i-p+1]$.
Then the set of occurrences of $P$ in $T$ is exactly the set of positions $p$ such that $V_{\mathrm{easy}}[p]=1$ or $(V_{\mathrm{suf}}[p]\wedge V_{\mathrm{pref}}[p]\wedge V_{\mathrm{anchor}}[p])=1$. It is straightforward to solve the easy case and the suffix/prefix case in $\cO(N+m)$ time using any standard pattern-matching algorithm (e.g. KMP~\cite{knuthFast1977}) and using $\T_P$, respectively (see~\cite{aoyamaFaster2018} for details). We now focus on computing $V_{\mathrm{anchor}}$.

To compute $V_{\mathrm{anchor}}$, we count, for each starting position $p$, the number of fully covered segments whose matching condition is satisfied, and compare it with the total number of segments fully covered by the corresponding alignment of $P$. More precisely, let $G[p]$ denote the number of segments $T_i$ such that $p\le L_i$ and $p+m-1\ge R_i$. The array $G$ has length $L-m+1$ and can be computed in $\cO(L)$ time using prefix sums. We initialize an array $A$ of length $L-m+1$ to zero and, for every width $k$ occurring in $T$, apply Lemma~\ref{lem:anchor} to add the contribution of the segments of width $k$ to $A$. After all widths have been processed, $A[p]$ is the number of fully covered segments whose matching condition is satisfied. Hence, $V_{\mathrm{anchor}}[p]=1$ if and only if $A[p]=G[p]$. By Lemma~\ref{lem:anchor}, processing all widths takes $\cO\left(\sum_k\left(N_k+\sqrt{LmB_k\log m}\right)\right)$ time. Over all $k$, using Cauchy--Schwarz and the fact that $\sum_k k B_k= \sum_k N_k = N$, we observe that $$\sum_k{\sqrt{B_k}}=\sum_k\sqrt{{\frac{1}{k}kB_k}}=\sum_k\sqrt{{\frac{1}{k}N_k}}\le \sqrt{\sum_k{\frac{1}{k}}\sum_k{N_k}}=\cO(\sqrt{N\log L}).$$ Wrapping up, the total time complexity over all widths is
\begin{align*}
\cO\left(L+m+\sum_k\left(N_k+\sqrt{LmB_k\log m }\right)\right)&=\cO\left(N+\sqrt{Lm\log m}\sum_k\sqrt{B_k}\right)\\
&=\cO\left(N+\sqrt{NLm\log L\log m}\right),
\end{align*}

Using $L \le N$, this is $\tcO(N\sqrt{m})$.
\end{proof}

\section{Indexing}

\subsection{On previous results}

For \ed strings, Gibney~\cite{gibneyEfficient2020} gave the main indexing results, and several of them can be directly applied in our setting. In particular, an index for \ed strings was given, answering queries in $\cO(nm^2)$ time after linear-time preprocessing. In fact, by slightly adapting the index, we can obtain an index for \gd strings with linear-time preprocessing and $\cO(nm)$ query time.  

\begin{theorem}\label{thm:index}
A \gd string $T$ can be preprocessed in $\cO(N)$ time so that pattern matching queries are answered in $\cO(nm)$ time.
\end{theorem}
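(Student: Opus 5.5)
We follow the scheme of Gibney's index: precompute a structure on the segments of $T$ so that, at query time, we scan the segments left to right while maintaining the set of prefix lengths of $P$ that can end at the current segment boundary. For \ed strings this set is an arbitrary subset of $\{0,\dots,m\}$, and updating it through one segment costs $\cO(m^2)$ via active-prefix style queries. For \gd strings, all strings in $T_i$ have the same width $k_i$. So a prefix of length $j$ active before $T_i$ can only become the prefix of length $j+k_i$ (or complete an occurrence). Each update is then a shift plus a per-position check, costing $\cO(m)$ per segment.

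\textbf{Preprocessing.} In $\cO(N)$ time we build the generalized suffix tree of all strings in $T$, with a constant-time LCA structure. We also store, for each segment $T_i$, the locus of each of its strings. At query time we attach $P$ to the structure: we build the suffix tree of $P$ and align it with the stored tree by matching $P$ against it (e.g. with matching statistics and suffix links), in $\cO(m)$ time up to alphabet factors. This gives, for every fragment $P[a\dots b]$, a way to test in $\cO(1)$ time whether it equals some string of $T_i$, whether it is a prefix of some string of $T_i$, and whether it is a suffix of some string of $T_i$. For the prefix and suffix tests we mark, in a per-segment fashion, the loci of the strings of $T_i$ and their suffixes. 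These tests are all that Gibney's index uses, so I would reuse his construction and argue the tests cost $\cO(1)$ each after $\cO(N)$ preprocessing plus $\cO(m)$ query-time setup.

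\textbf{Query.} We maintain a bit vector $D$ of length $m$, where $D[j]=1$ iff $P[1\dots j]$ is a suffix of some string of $\mathcal{L}(T_1\cdots T_i)$. When processing $T_{i+1}$ of width $k=k_{i+1}$:
\begin{itemize}
\item For every $j$ with $D[j]=1$ and $j+k<m$, set $D'[j+k]=1$ if $P[j+1\dots j+k]\in T_{i+1}$.
\item For every $j$ with $D[j]=1$ and $j+k\ge m$, report an occurrence if $P[j+1\dots m]$ is a prefix of a string in $T_{i+1}$.
\item For every $\ell<\min(k,m)$, set $D'[\ell]=1$ if $P[1\dots \ell]$ is a suffix of a string of $T_{i+1}$.
\item Separately, handle the easy case $k\ge m$: check whether $P$ occurs inside some string of $T_{i+1}$.
\end{itemize}
Each of the first three steps makes $\cO(m)$ constant-time tests, so each segment costs $\cO(m)$ and the whole query costs $\cO(nm)$.

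\textbf{Main obstacle.} The hard part is the easy case, and the suffix and membership tests, within $\cO(1)$ amortised time per segment without spending time proportional to $|T_i|$ at query time. For the easy case, I would precompute in $\cO(N)$ a suffix tree over all strings with leaves labelled by segment. Then, after locating $P$, deciding which segments contain it reduces to listing the distinct segment labels in a subtree. Each label is reported once, which is $\cO(n)$ via document listing in $\cO(m+n)$. For membership of $P[j+1\dots j+k]$ in $T_{i+1}$, the concern is checking membership quickly. I would first try storing, for each string node, a hash of segment ids, giving $\cO(1)$ expected time, or follow Gibney's deterministic approach if the hash is not acceptable.
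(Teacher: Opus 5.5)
\paragraph{Where the proof falls short.}
Your left-to-right scheme is sound and is the paper's scheme. Every string of $T_{i+1}$ has length $k_{i+1}$, so an active prefix of length $j$ can only move to $j+k_{i+1}$, and each segment needs only $\cO(m)$ tests. (Keep the empty prefix $j=0$ always active; otherwise an occurrence starting exactly at $L_{i+1}$ with $k_{i+1}<m$ is lost.) Your document-listing treatment of the easy case is also fine.

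The gap is in how you answer the tests. You assert that one generalized suffix tree of all strings of $T$, aligned with $P$ at query time, decides in $\cO(1)$ time whether an arbitrary fragment $P[a\dots b]$ equals, is a prefix of, or is a suffix of a string of a \emph{specific} segment $T_i$. You give no mechanism for this, and you flag it yourself as the main obstacle. The two options you sketch both fail as stated:
\begin{itemize}
\item Marking the loci of the strings of $T_i$ and their suffixes at query time costs $\Theta(\|T_i\|)$ per segment, i.e.\ up to $\Theta(N)$ per query, which breaks the $\cO(nm)$ bound.
\item Doing it at preprocessing time needs three things you do not supply. First, a dictionary over (locus, segment) pairs; hashing gives only expected bounds. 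Second, an $\cO(1)$ way to locate an arbitrary $P[a\dots b]$ in the global tree, which is a weighted-ancestor query you never address. Third, for the prefix test, marks on all prefixes of the strings of $T_{i+1}$, not just on the strings and their suffixes.
\end{itemize}
Appealing to Gibney does not close this gap. His index, like the paper's adaptation, scans $P$ through per-segment structures rather than answering constant-time fragment queries.

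\paragraph{The missing idea.}
You do not need random-access tests at all. For each segment, all the answers you need can be computed in one batch scan of $P$ in $\cO(m)$ time, using per-segment structures of total size $\cO(N)$. The paper builds, for every $T_i$, an Aho--Corasick automaton, a generalized suffix tree of $T_i$, and a generalized suffix tree of the reversed strings of $T_i$.
\begin{itemize}
\item Running $P$ through the automaton reports every $j$ with $P[j+1\dots j+k_i]\in T_i$ in $\cO(m+\mathrm{occ})$ time. Here $\mathrm{occ}\le m$, because two distinct strings of the same width cannot end at the same position of $P$. This is where the equal-width property is really used.
\item Walking $P$ down the suffix tree of $T_i$ gives every $\ell$ such that $P[1\dots\ell]$ is a suffix of a string of $T_i$, and also settles the easy case.
\item Walking the reverse of $P$ down the reversed tree gives every $j$ such that $P[j+1\dots m]$ is a prefix of a string of $T_i$.
\end{itemize}
Combining these $\cO(m)$-size answer sets with your bit-vector update costs $\cO(m)$ per segment, deterministically. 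This gives $\cO(nm)$ per query after $\cO(N)$ preprocessing.
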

\begin{proof}
For every segment $T_i$, we build an Aho--Corasick automaton for the set of strings in $T_i$. We also build two generalized suffix trees: one for $T_i$, and one for the set obtained by reversing every string in $T_i$. The total size of these data structures over all segments is $\cO(N)$.

When a query pattern $P$ of length $m$ is given, we process the segments from left to right, maintaining a bit vector that marks the prefixes of $P$ matched by the segments processed so far. As in \cref{main}, we distinguish the four cases: easy, suffix, prefix, and anchor. The easy case is handled by searching $P$ in the generalized suffix tree of $T_i$, which takes $\cO(m)$ time. The suffix and prefix cases are handled by searching for the prefixes and suffixes of $P$ that match strings in $T_i$, which can be done in $\cO(m)$ time using the generalized suffix trees. Finally, the anchor case is handled by reporting the occurrences of strings of $T_i$ as substrings of $P$. This takes $\cO(m+\occ)=\cO(m)$ time using the Aho--Corasick automaton, since all strings in $T_i$ have the same length and therefore the cumulative number of occurrences is at most $m$. Since each segment is processed in $\cO(m)$ time, the total query time is $\cO(nm)$.
\end{proof}

Another result from~\cite{gibneyEfficient2020} is a conditional lower bound for indexing \ed strings, showing that no index with polynomial preprocessing can answer queries in $\cO(n^{\alpha}m^{\cO(1)})$ time with $\alpha<1$, unless SETH is false. The proof is based on a reduction from the Orthogonal Vectors problem:

\probdef{Orthogonal Vectors (OV)}{Two sets $X$ and $Y$ of $n$ vectors in $\{0,1\}^d$, where $d=\Theta(\log n)$}{Does there exist a pair $(a,b)\in X\times Y$ such that $a\cdot b=0$?}

The \emph{Orthogonal vectors hypothesis} (OVH) states that OV cannot be solved in $\cO(n^{2-\varepsilon})$ time for any $\varepsilon>0$. It is known that OVH is implied by SETH~\cite{williamsnew2005}. Importantly, OVH also implies hardness for the indexing version of the problem:

\begin{lemma}[\cite{equiGraphs2020,equiGraphs2023}]
   Under OVH, one cannot solve OV in $\cO(|X|^{\alpha}|Y|^{\beta})$ with $\alpha<1$ or $\beta<1$, even after allowing arbitrary polynomial-time preprocessing of $X$.
\end{lemma}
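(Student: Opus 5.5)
The plan is to prove the lemma by contradiction. We assume an index that, after polynomial preprocessing of $X$, answers every query $Y$ in $\cO(|X|^{\alpha}|Y|^{\beta})$ time with $\alpha<1$ (or $\beta<1$). We then use this index to solve plain OV in $\cO(n^{2-\varepsilon})$ time, contradicting OVH. The only tool needed is batching: the preprocessed side is split into small groups, and the query side is split into small batches. The group sizes are chosen so that polynomial preprocessing of each group is cheap, while the sublinear query exponent still pays off.

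For the case $\alpha<1$:
\begin{itemize}
\item Given an OV instance with $|X|=|Y|=n$, split $X$ into $n/g$ groups of size $g=n^{\delta}$, for a small $\delta>0$ chosen below.
\item Let $g^{c}$ denote the polynomial preprocessing time. Preprocessing all groups costs $(n/g)\,g^{c}=n^{1+\delta(c-1)}$, which is $\cO(n^{2-\varepsilon})$ once $\delta<1/(c-1)$.
\item For each group, query it with all of $Y$ at once. The query cost is $\cO(g^{\alpha}n^{\beta})$ for some fixed $\beta$.
\end{itemize}
This naive choice fails when $\beta\ge 1$, so the real work is to pick the batch sizes so that the total query cost beats $n^2$.

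I would therefore also split $Y$ into batches of size $h$, issuing one query per (group, batch) pair. The total query cost is then
\[
\frac{n}{g}\cdot\frac{n}{h}\cdot g^{\alpha}h^{\beta}=n^{2}g^{\alpha-1}h^{\beta-1}.
\]
Taking $h=\Theta(1)$ (constant, or polylogarithmic so each query still has $d=\Theta(\log n)$ dimensions) gives $n^{2}g^{\alpha-1}$ up to constants. With $g=n^{\delta}$ and $\alpha<1$, this is $n^{2-\delta(1-\alpha)}$, which is strictly subquadratic.

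The case $\beta<1$ is symmetric:
\begin{itemize}
\item Take $g$ constant, so $X$ groups have bounded size and preprocessing is trivially cheap.
\item Take $h=n^{\delta}$. The total query cost becomes $n^{2}h^{\beta-1}=n^{2-\delta(1-\beta)}$.
\item Preprocessing $n$ constant-size groups costs only $\cO(n)$.
\end{itemize}
In both cases the instance dimension stays $\Theta(\log n)$, which is also $\Theta(\log g)$ up to constants once $g$ is polynomial in $n$. When $g$ is constant, I would pad vectors so that the hypothesis is applied to instances with the required dimension. The answer to OV is the OR of all group--batch answers.

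I expect the main obstacle to be making sure that the "$\cO(\cdot)$" query bound is a uniform statement with the dimension $d=\Theta(\log n)$ tied to the original instance size and not to the group size. To handle this, I would state the hypothesis for instances with dimension $c\log n$ for arbitrary $c$, which holds for OVH in its standard form (the dimension-$c\log n$ version for every $c$), and absorb polylogarithmic factors into $\varepsilon$.
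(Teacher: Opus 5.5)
Your batching argument is correct and matches the standard proof of this lemma; the paper gives no proof of its own and only cites Equi et al. You cut $X$ into blocks of size $n^{\delta}$ so that polynomial preprocessing totals $n^{1+\delta(c-1)}$, and cut $Y$ into batches so that the total query cost $n^{2}g^{\alpha-1}h^{\beta-1}$ is subquadratic, with $h=\cO(1)$ when $\alpha<1$ and $g=\cO(1)$ when $\beta<1$. The padding remark is unnecessary: in each case one of $g,h$ is $n^{\Theta(1)}$, so $d=\Theta(\log n)=\Theta(\log(g+h))$ holds automatically.
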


We observe that the reduction of Gibney actually uses generalized degenerate strings, hence the same lower bound applies to \gd strings:

\begin{theorem}[\cite{gibneyEfficient2020}]
Over an alphabet of size two, there is no solution for \gd-indexing with query time $\cO(n^\alpha m^{\cO(1)} + m)$ for constant $\alpha < 1$ under OVH (and hence SETH), even with arbitrary polynomial preprocessing.
\end{theorem}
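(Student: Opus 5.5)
We reduce from the indexing version of OV, using the preceding lemma of \cite{equiGraphs2020,equiGraphs2023}. The idea is to turn the preprocessable set $X$ into a \gd string $T$ with $n=\Theta(|X|)$ segments, and each query vector $b\in Y$ into a pattern $P_b$ of length $m=\cO(d)=\cO(\log|X|)$. Then $P_b$ occurs in $T$ if and only if some $a\in X$ satisfies $a\cdot b=0$.

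\textbf{Encoding.} Use a binary alphabet $\{\str{0},\str{1}\}$. For a query $b$, let $P_b$ be a separator $\str{1}^{\ell}\str{0}$ (with $\ell=\Theta(d)$, say $\ell=d+1$) followed by the bits of $b$ and then a closing separator. On the text side, each $a\in X$ gets a gadget of consecutive segments of width $1$. The separator characters become singleton segments. Coordinate $j$ of $a$ becomes the segment $\{\str{0}\}$ if $a_j=1$ and $\{\str{0},\str{1}\}$ if $a_j=0$. Then $b$ matches the coordinate part of the gadget exactly when $a_jb_j=0$ for all $j$.

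\textbf{Gluing the gadgets.} The gadgets are concatenated, possibly interleaved with filler segments. We check that $P_b$ can only align at the start of a gadget, which is why the separator $\str{1}^{\ell}\str{0}$ is needed. Coordinate segments never produce $\ell$ consecutive forced $\str{1}$'s, because $\{\str{0}\}$ and $\{\str{0},\str{1}\}$ can produce a run of $\str{1}$'s of length at most $d<\ell$. However, a run of $\str{1}$'s could straddle a gadget boundary. To rule this out, we put a separating segment $\{\str{0}\}$ of length $d+1$ between gadgets. Alternatively, we use a longer separator so that any run of $\ell$ ones lies entirely inside a genuine separator. Since all segments have width $1$, $T$ is a \gd string (in fact degenerate). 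Its size is $N=\cO(|X|d)$, it has $n=\cO(|X|d)$ segments, and it is built in polynomial time from $X$.

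\textbf{Deriving the bound.} An index with query time $\cO(n^\alpha m^{c}+m)$, $\alpha<1$, answers all $|Y|$ queries in $|Y|\cdot\cO((|X|d)^{\alpha}d^{c}+d)=\cO(|X|^{\alpha'}|Y|)$ for any $\alpha<\alpha'<1$, since $d=\Theta(\log |X|)$. This contradicts the lemma, which holds even with arbitrary polynomial preprocessing of $X$. The main obstacle is the alignment step, namely proving that no spurious occurrence crosses gadget boundaries or starts mid-gadget. I would handle it by the run-length argument above: the unique occurrence of $\str{1}^{\ell}\str{0}$ in every language string pins the start. This matches Gibney's construction, so it suffices to check that every segment in that construction has uniform width, which holds trivially here.
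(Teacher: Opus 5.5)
Your proposal is correct and follows essentially the same route as the paper, which gives no new argument: it invokes Gibney's reduction from OV-indexing (through the lemma of Equi et al.) and observes that all segments of the resulting text have uniform width. You reconstruct such a reduction explicitly, even with width-$1$ segments. Your alignment argument is sound once it is phrased in terms of text positions that \emph{can} produce a $\str{1}$ (rather than ``forced'' ones), with at least one $\{\str{0}\}$ filler between consecutive gadgets so that every separator $\{\str{1}\}^{\ell}$ is a maximal such run of length exactly $\ell=d+1$.
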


Finally, another reduction from the same paper gives the following result:

\begin{theorem}[\cite{gibneyEfficient2020}]
OV-indexing on vector sets $X,Y\subseteq\{0,1\}^d$ can be reduced to \ed-indexing over an \ed-text $T$, constructed from $X$, of length $n=\cO(d(|X|+|Y|))$, of size $N=\cO(d(|X|^2+|Y|))$, and having an alphabet of size four. One pattern $P$ needs to be given as a query, where $P$ is constructed from $Y$ and has length $m=\cO(d|Y||X|)$.
\end{theorem}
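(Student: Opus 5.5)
We need to reconstruct Gibney's reduction from OV-indexing to ED-indexing with the stated size parameters: the text is built from $X$ alone (so it can be preprocessed), and the pattern is built from $Y$.

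\textbf{Gadgets.} Use the alphabet $\{\str{0},\str{1},\str{\$},\str{\#}\}$.

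For a vector $a\in X$, let the gadget $\mathrm{G}(a)$ be the \ed string of length $d$ whose $j$-th segment is
\begin{itemize}
\item $\{\str{0}\}$ if $a[j]=1$;
\item $\{\str{0},\str{1}\}$ if $a[j]=0$.
\end{itemize}
For $b\in Y$, encode $b$ as the solid string $b[1]\cdots b[d]$. Then this solid string occurs in $\mathcal{L}(\mathrm{G}(a))$, aligned to the start, if and only if $a\cdot b=0$. All gadget segments have width $1$, so each has size $\cO(d)$.

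\textbf{Text.} Take $T=E_1\,\mathrm{Gad}\,E_2$, where:
\begin{itemize}
\item $\mathrm{Gad}$ is the sequence of $d$ segments whose $j$-th segment is the union over $a\in X$ of tagged options. Each option is prefixed by an identifier of $a$, so that all positions of one traversal are forced to use the same $a$.
\item The identifier is realized with the quadratic term: each option in the first gadget segment is a string encoding "choose $a$" of length $\cO(d|X|)$, for example $\str{\#}$ followed by a one-hot block over $X$.
\item $E_1$ and $E_2$ are elastic "skip" segments of total width $\cO(d|Y|)$. For instance, they use $|Y|$ segments of the form $\{\varepsilon, w\}$, where $w$ is a generic block $(\str{\$}\{\str{0},\str{1}\}^d)$ simulated with $d$ segments each. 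These let the pattern place any one of its $|Y|$ blocks onto $\mathrm{Gad}$ while the other blocks are absorbed by free segments.
\end{itemize}
This gives $n=\cO(d(|X|+|Y|))$ segments and size $N=\cO(d(|X|^2+|Y|))$. The $|X|^2$ term is exactly the cost of the one-hot identifiers across $|X|$ options of length $\cO(d|X|)$.

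\textbf{Pattern.} Take $P=\prod_{b\in Y}\str{\$}\,\mathrm{pad}\,b$. Here $\mathrm{pad}$ is a block of length $\cO(d|X|)$ that matches any identifier, so that $m=\cO(d|X||Y|)$.

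\textbf{Correctness.}
\begin{itemize}
\item ($\Leftarrow$) If $a\cdot b=0$, align the block of $b$ with the gadget choosing $a$, and absorb the blocks before and after it using the free segments of $E_1$ and $E_2$. This gives an occurrence.
\item ($\Rightarrow$) Use the separators $\str{\$}$ and $\str{\#}$. The symbol $\str{\#}$ appears only in $\mathrm{Gad}$, so any occurrence must align some block's $\str{\$}$ with the gadget start. The tag then fixes $a$, and the $d$ positions that follow certify $a\cdot b=0$.
\end{itemize}

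\textbf{Conclusion.} The text depends only on $X$ and the pattern depends only on $Y$, so an \ed-index on $T$ answers OV-indexing queries, as claimed.

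\textbf{Main obstacle.} The hard step is forcing consistency, i.e., that one traversal of the $d$ gadget segments uses a single vector $a$ rather than mixing coordinates from different vectors. Mixing is exactly what inflates the text to size $|X|^2$. My first attempt would be to encode each full gadget $\mathrm{G}(a)$ as a single segment containing all $2^{\#\text{zeros}}$ strings. This is exponential, so it must be replaced by padding each coordinate option with the one-hot tag of $a$ and letting the pattern carry a matching wildcard-free tag.

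Verifying this soundness step carefully is where I expect the difficulty, and possibly where the original argument is delicate.
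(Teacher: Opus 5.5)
The paper does not prove this statement (it only cites Gibney), so your argument has to stand on its own. Your coordinate gadget $\mathrm{G}(a)$ is correct, but the argument has a genuine gap at exactly the step you flag: nothing forces one traversal of $\mathrm{Gad}$ to use a single vector. The segments of an \ed string are chosen independently, so an identifier in the first segment of $\mathrm{Gad}$ does not constrain segments $2,\dots,d$. Each later segment is then the union of the options $\{\str{0}\}$ / $\{\str{0},\str{1}\}$ over $X$, which accepts $b$ as soon as every coordinate is compatible with \emph{some}, possibly different, $a\in X$. For example, $X=\{10,01\}$ and $b=11$ pass, so ($\Rightarrow$) fails. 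Tagging every coordinate option does not help, because $P$ is solid. Before each coordinate it spells one fixed tag, and a $\mathrm{pad}$ that ``matches any identifier'' cannot exist: distinct identifiers are distinct strings, and a solid block equals at most one of them. Also, if $\mathrm{pad}$ contains $\str{\#}$, then for $|Y|\ge 2$ the pattern has several $\str{\#}$'s, while your language strings contain one. Direction ($\Leftarrow$) fails for a length reason as well. By your own width bounds, every string of $\mathcal{L}(T)$ has length $\cO(d(|X|+|Y|))$. But $|P|\ge|X||Y|$, since each of the $|Y|$ blocks carries a $\mathrm{pad}$ at least as long as a one-hot identifier. So for $|X|=|Y|$ and $d=\Theta(\log|X|)$, $P$ never occurs. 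Finally, $\{\varepsilon,w\}$ with $w$ spread over $d$ segments is not expressible, because each segment picks $\varepsilon$ independently. That last point is easy to repair: make the skip blocks mandatory, since an occurrence may start inside them.

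The missing idea is that no identifiers are needed. Select the vector by a \emph{shift}, using the zero vector as a filler. Lay out $R=\mathrm{G}(a_1)\,\str{\$}\cdots\mathrm{G}(a_{|X|})\,\str{\$}$, and note that every $\mathrm{G}(a)$ accepts $\str{0}^d$. Let $Z=\str{0}^d\str{\$}$, encode $y$ as $B(y)=Z^{|X|-1}\,y\,\str{\$}\,Z^{|X|-1}$, and flank $R$ on each side by the segment $F=\{Z^k:0\le k<|X|\}$. Choosing $Z^k$ on the left and $Z^{|X|-1-k}$ on the right places $y$ exactly on $\mathrm{G}(a_{|X|-k})$, while every other gadget reads $Z$. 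Consistency is then automatic, because a whole gadget is compared with a whole $y$. The two copies of $F$ give the $\Theta(d|X|^2)$ term, and the padding gives $m=\Theta(d|X||Y|)$. Now take $P=\str{\#}B(y_1)\str{\#}\cdots\str{\#}B(y_{|Y|})\str{\#}$ and $T=U^{|Y|-1}\,\str{\#}\,F\,R\,F\,U^{|Y|-1}\,\str{\#}$, where $U=\str{\#}\,Z^{|X|-1}\,\{\str{0},\str{1}\}^d\,\str{\$}\,Z^{|X|-1}$ is a universal block. The $|Y|+1$ separators of $P$ are equally spaced, and each side of the middle offers only $|Y|-1$ such gaps. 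Hence some block must sit exactly on $F\,R\,F$ with shifts summing to $|X|-1$. This construction uses four letters and $n=\cO(d(|X|+|Y|))$ segments, but each $U$ has size $\Theta(d|X|)$. That cost is unavoidable: an occurrence lies inside a single string of $\mathcal{L}(T)$, so $N\ge m$ in every yes-instance. With $m=\Theta(d|X||Y|)$, the bound $N=\cO(d(|X|^2+|Y|))$ is therefore attainable only when $|X||Y|=\cO(|X|^2+|Y|)$, i.e., essentially when $|Y|=\cO(|X|)$.
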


It is then claimed that this implies a lower bound that prohibits any index with polynomial preprocessing that can answer queries in $\cO(n^{\alpha}m^{\beta}+m)$ time with $\beta<1$, unless SETH is false. However, the proof appears not to account for the additive $+m$ term, which is necessary to read the pattern $P$ in the query. In fact, since $m=\cO(d|Y||X|)$, simply reading the pattern already takes enough time to solve the OV instance without contradicting OVH. We did not find a way to fix the reduction, and leave the existence of such a lower bound as an open problem. However, in what follows, we show several hardness results for indexing \gd and \ed strings, which suggest that the claimed lower bound for \ed strings is likely to hold.

\subsection{Combinatorial lower bounds}

In this section we prove that the $\cO(nm)$-time queries for \gd pattern matching from \cref{thm:index}, and the $\cO(nm^2)$ for \ed pattern matching from~\cite{gibneyEfficient2020}, are essentially optimal for combinatorial algorithms under the $k$-clique conjecture.

\probdef{$k$-clique}{A graph $G$ with $n$ vertices}{Does $G$ contain a clique of size $k$?}

\begin{conjecture}
No combinatorial algorithm solves $k$-clique on $n$ vertices in time $\cO(n^{k-\varepsilon})$ for any $\varepsilon>0$.
\end{conjecture}

We adapt the reduction from Bringmann et al.~\cite{bringmannDichotomy2016}, which was originally designed for the Word Break problem. For a graph $G$ with $V(G)=\{v_1,\dots,v_n\}$, set $\Sigma=\{1,\dots,n,\#\}$ where each integer is a distinct symbol. We define the string $M = 1\,2\,\dots\,n$.
For each vertex $v_i$ we set
$$
s(v_i)=\#^{n-i+1}\,M[1\dots i-1],\qquad
t(v_i)=M[i\dots n]\,\#^{i-1},
$$
both of length $n$. For each edge $(v_i,v_j)\in E(G)$ (since the graph is undirected, we represent every edge by both orientations) we write
$$
e(v_i,v_j)=M[i\dots n]\,\#\,j\,\#\,M[1\dots i-1]
$$
of length $n+3$. Then set
$$
S(G)=\{s(v):v\in V(G)\},\;
T(G)=\{t(v):v\in V(G)\},\;
\E(G)=\{e(u,v):(u,v)\in E(G)\}.
$$

Let $T^k_G$ be the following \gd string:
$$
T^k_G = S(G)\;\cdot\;\E(G)^{k-1}\;\cdot\;T(G).
$$
Assuming that $k$ is constant, $|T^k_G| = \cO(1)$ and $\|T^k_G\| = \cO(n^3)$.

For a set $S=\{v_{i_1},\dots,v_{i_{k-1}}\}$ of $k-1$ vertices, we define the pattern
$$
P_S = M \,\#\, i_1 \,\#\, M \,\#\, i_2 \,\#\, \cdots \,\#\, i_{k-1} \,\#\, M,
$$
so $|P_S| = \Theta(n)$.

\begin{lemma}[\cite{bringmannDichotomy2016}]\label{lem:gd-reduction}
Let $k$ be a fixed integer, and $G$ be a graph with $n$ vertices. Let $S=\{v_{i_1},\dots,v_{i_{k-1}}\}$ be a $(k-1)$-clique in $G$. Then $P_S$ occurs in $T^k_G$ if and only if there exists a vertex $u$ such that $S\cup\{u\}$ is a $k$-clique in $G$.
\end{lemma}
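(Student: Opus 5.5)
The plan is to prove both directions by explicitly describing the strings of $\mathcal{L}(T^k_G)$ and using the $\#$ symbols as rigid anchors. Every string of $\mathcal{L}(T^k_G)$ has the form
$$X=s(v_a)\,e(u_1,w_1)\cdots e(u_{k-1},w_{k-1})\,t(v_b)$$
for some vertex $v_a$, oriented edges $(v_{u_j},v_{w_j})\in E(G)$, and vertex $v_b$. Expanding the definitions gives
$$X=\#^{n-a+1}\,M[1\dots a-1]\,M[u_1\dots n]\,\#\,w_1\,\#\,M[1\dots u_1-1]\,M[u_2\dots n]\,\#\,w_2\,\#\cdots\#\,w_{k-1}\,\#\,M[1\dots u_{k-1}-1]\,M[b\dots n]\,\#^{b-1}.$$
So $X$ consists of a block of $\#$'s, then $k$ maximal $\#$-free \enquote{gluing zones} $Z_0,\dots,Z_{k-1}$ separated by the gadgets $\#w_j\#$, then a final block of $\#$'s. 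The zones are $Z_0=M[1\dots a-1]M[u_1\dots n]$, then $Z_j=M[1\dots u_j-1]M[u_{j+1}\dots n]$ for $1\le j\le k-2$, and finally $Z_{k-1}=M[1\dots u_{k-1}-1]M[b\dots n]$.

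The length bookkeeping shows the occurrence is essentially rigid. We have $|P_S|=kn+3(k-1)$ and the span is $L=2n+(k-1)(n+3)$, so only $n+1$ starting positions are possible.

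For the \enquote{if} direction, suppose $u=v_a$ is adjacent to every $v_{i_j}$. Choose $u_j=a$, $w_j=i_j$ for all $j$, and $b=a$. Then every zone $Z_j$ equals $M$, and the middle part of $X$ reads exactly $M\#i_1\#M\cdots\#i_{k-1}\#M=P_S$. Hence $P_S$ occurs in $X$, at position $n-a+2$.

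For the \enquote{only if} direction, fix an occurrence of $P_S$ in some $X$ as above.
\begin{itemize}
\item $P_S$ contains exactly $2(k-1)$ symbols $\#$, grouped as $k-1$ isolated gadgets $\#i_j\#$. Between consecutive gadgets lie $\#$-free runs of length exactly $n$, and before the first and after the last gadget there are also runs of length exactly $n$.
\item Every $\#$-free run of $X$ has length at most $2n-1$, and these runs are exactly the zones $Z_j$. The only places where $\#\,x\,\#$ with a single non-$\#$ symbol $x$ occurs are the gadgets $\#w_j\#$; the leading and trailing $\#$-blocks contain no such pattern.
\item It follows that the $j$-th gadget of $P_S$ must be aligned with the $j$-th gadget of $X$. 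The reason is that the lengths of the $P$-blocks between gadgets are fixed at $n$, and the $\#$-count forbids any shift. Here I expect the main care to be needed: I must rule out the first copy of $M$ straddling the leading $\#$-block, and the last copy straddling the trailing one, which follows because $M$ contains no $\#$.
\item Once the gadgets are aligned, we get $w_j=i_j$. Moreover, each copy of $M$ must coincide with a length-$n$ window of a zone that ends (or starts) at a gadget. For the inner zones, a length-$n$ window ending at the gadget is forced to be the whole zone. Hence each $Z_j$ for $1\le j\le k-2$ equals $M$, i.e.\ $M[1\dots u_j-1]M[u_{j+1}\dots n]=M$. Since all symbols are distinct, this forces $u_{j+1}=u_j$.
\item For the outer zones, the last $n$ symbols of $Z_0$ must spell $M$. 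Because the symbols of $M$ are distinct and increasing, this forces $Z_0$ to have length exactly $n$ with $u_1=a$. Symmetrically, the first $n$ symbols of $Z_{k-1}$ must spell $M$, which forces $b=u_{k-1}$.
\end{itemize}

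Consequently $u_1=\dots=u_{k-1}=a$ and $(v_a,v_{i_j})\in E(G)$ for all $j$. Since $G$ has no self-loops, $v_a\notin S$. As $S$ is a $(k-1)$-clique, $S\cup\{v_a\}$ is a $k$-clique.

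The step I expect to be the main obstacle is making the rigidity argument airtight, namely showing that the gadgets of $P_S$ align one-to-one with those of $X$. I would first handle it by counting $\#$'s together with the fixed run lengths, and fall back on the explicit offset computation from the span if that is not clean enough.
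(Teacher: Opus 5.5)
Your strategy is the right direct verification: expand each string of $\mathcal{L}(T^k_G)$, align the gadgets, read off $w_j=i_j$, and use the inner zones to force $u_1=\dots=u_{k-1}$. The paper gives no proof of its own and defers to the cited work of Bringmann et al. The ``if'' direction is fine as written. However, two steps of the ``only if'' direction are false as stated.

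\textbf{First step.} It is not true that $\#x\#$ with a single non-$\#$ symbol $x$ occurs only at gadgets. The zone $Z_j$ has length $n+u_j-u_{j+1}$, which equals $1$ when $u_j=1$ and $u_{j+1}=n$; then $X$ contains $\#w_j\#\,n\,\#w_{j+1}\#$. Likewise $Z_0$ has length $1$ when $a=1$ and $u_1=n$, and then $\#\,n\,\#$ straddles the leading block. So the alignment cannot rest on that remark. Use instead the $\#$-count argument you mention as a fallback, made precise:
\begin{itemize}
\item Every $\#$ of $P_S$ has non-$\#$ neighbours on both sides.
\item No $\#$ of the leading block $\#^{n-a+1}$ is preceded by a non-$\#$ symbol, and no $\#$ of the trailing block is followed by one.
\item Hence the $2(k-1)$ symbols $\#$ of $P_S$ map injectively and in order onto the $2(k-1)$ gadget symbols $\#$ of $X$, and the $j$-th gadgets coincide.
\end{itemize}
Equivalently, each of the $2k-1$ maximal $\#$-free runs of $P_S$ lies in a different maximal $\#$-free run of $X$, in order. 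The maximal $\#$-free runs of $X$ are exactly $Z_0,w_1,Z_1,\dots,w_{k-1},Z_{k-1}$, all nonempty, so there are $2k-1$ of them. Therefore the $r$-th run of $P_S$ lands in the $r$-th run of $X$.

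\textbf{Second step.} The outer-zone claim fails. If $u_1=1$, then $Z_0=M[1\dots a-1]\,M$ ends with $M$ for every $a$, so neither $|Z_0|=n$ nor $u_1=a$ is forced. For instance, take $u_j=1$ and $w_j=i_j$ for all $j$, and $b=1$. Then $X=\#^{n-a+1}M[1\dots a-1]\,M\#i_1\#M\cdots\#i_{k-1}\#M$ contains $P_S$ at position $n+1$ for arbitrary $a$, and $v_a$ may be adjacent to nothing in $S$. So your concluding claim $(v_a,v_{i_j})\in E(G)$ is unjustified.

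The repair is simple. The edges actually used by $X$ are $(v_{u_j},v_{w_j})=(v_{u_1},v_{i_j})$, so take $u=v_{u_1}$, which lies outside $S$ since $G$ has no self-loops. You can then skip the outer zones altogether; for $k=2$ there are no inner zones and only $w_1=i_1$ is needed. With these two changes the argument is complete.
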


\begin{corollary}\label{cor:reduction}
Let $k$ be a fixed integer, and $G$ be a graph with $n$ vertices. We can build a \gd string $T^k_G$ of length $\cO(1)$ and size $\cO(n^3)$ and, for any $(k-1)$-clique $S$, a pattern $P_S$ of length $\Theta(n)$ such that $P_S$ occurs in $T^k_G$ if and only if $S$ can be extended to a $k$-clique.
\end{corollary}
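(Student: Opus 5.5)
The corollary is essentially a repackaging of Lemma~\ref{lem:gd-reduction} together with size bookkeeping. I will first check that $T^k_G$ is a \gd string, then bound its length and size, then bound the pattern length, and finally read the equivalence off the lemma.

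\emph{The \gd property.} In $T^k_G=S(G)\cdot\E(G)^{k-1}\cdot T(G)$, every segment has strings of a single common length. For $S(G)$, each $s(v_i)=\#^{n-i+1}M[1\dots i-1]$ has length $(n-i+1)+(i-1)=n$. For $T(G)$, each $t(v_i)$ has length $(n-i+1)+(i-1)=n$. For $\E(G)$, each $e(v_i,v_j)$ has length $(n-i+1)+3+(i-1)=n+3$. So $T^k_G$ is a \gd string whose widths are $n$ and $n+3$; the segments are nonempty provided $G$ has an edge, and if $G$ has no edge the instance is trivial.

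\emph{Length and size.} The length of $T^k_G$ is $k+1=\cO(1)$ because $k$ is fixed. For the size:
\begin{itemize}
\item $S(G)$ and $T(G)$ each contain $n$ strings of length $n$, contributing $\cO(n^2)$.
\item Each copy of $\E(G)$ contains $2|E(G)|\le n^2$ strings of length $n+3$, contributing $\cO(n^3)$.
\end{itemize}
With $k-1=\cO(1)$ copies of $\E(G)$, the total size is $\cO(n^3)$. The construction also takes $\cO(n^3)$ time, which is what the later indexing argument will need.

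\emph{Pattern length.} The pattern $P_S$ consists of $k$ copies of $M$ (each of length $n$) and $k-1$ separators $\#\,i_j\,\#$ (each of length $3$). Hence $|P_S|=kn+3(k-1)=\Theta(n)$.

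\emph{The equivalence.} This is exactly Lemma~\ref{lem:gd-reduction}, applied to the $(k-1)$-clique $S$.

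The only point needing care is whether the lemma's hypothesis can be dropped, i.e.\ whether the corollary is meant for arbitrary $(k-1)$-sets $S$. It is not: the corollary states it only for $(k-1)$-cliques, which matches the lemma, so no extra argument is needed. There is no real obstacle; the work is the bookkeeping above.
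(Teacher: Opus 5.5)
Your proposal is correct and follows the paper's approach: the paper gives no separate proof and treats the corollary as Lemma~\ref{lem:gd-reduction} combined with the bounds $|T^k_G|=\cO(1)$, $\|T^k_G\|=\cO(n^3)$ and $|P_S|=\Theta(n)$ stated alongside the construction. Your explicit width check, showing that each segment has common width $n$ or $n+3$, and your remark about the edgeless case, where $\E(G)$ would be an empty segment, are small details the paper leaves implicit.
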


\begin{theorem}
Unless the $k$-clique conjecture fails, there is no combinatorial algorithm for \gd indexing with polynomial preprocessing and query time $\cO(n^{\cO(1)} m^{1-\varepsilon}+m)$ for any $\varepsilon>0$.
\end{theorem}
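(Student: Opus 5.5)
We reduce from $k$-clique via \cref{cor:reduction}. Suppose a combinatorial index exists with preprocessing time $\cO(N^{c})$ for some constant $c$ and query time $\cO(n^{a}m^{1-\varepsilon}+m)$, where $n$ is the number of segments of the indexed \gd string. Fix $\varepsilon>0$. We will pick the clique size $k$ depending on $\varepsilon$ and $c$, and derive a combinatorial $k$-clique algorithm running in time $\cO(n_G^{k-\delta})$ for some $\delta>0$, where $n_G$ is the number of vertices of the input graph $G$.

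The algorithm has four steps. First, build $T^k_G$ from \cref{cor:reduction}. It has $\cO(1)$ segments and size $\cO(n_G^3)$. Preprocess it with the index in $\cO(n_G^{3c})$ time. Second, enumerate all $(k-1)$-cliques $S$ of $G$ by brute force over $(k-1)$-subsets. This takes $\cO(n_G^{k-1}\cdot k^2)=\cO(n_G^{k-1})$ time, which is combinatorial and well below $n_G^{k}$. Third, for each such $S$, build $P_S$ of length $\Theta(n_G)$ and query the index. Since the text has $\cO(1)$ segments, the factor $n^{a}$ is a constant and each query costs $\cO(n_G^{1-\varepsilon}+n_G)$. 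Fourth, by \cref{lem:gd-reduction}, $G$ has a $k$-clique if and only if some query answers yes. This also uses the fact that every $k$-clique contains a $(k-1)$-clique, and that extending a $(k-1)$-clique by one vertex gives a $k$-clique.

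The main obstacle is the additive $+m$ term in the query time. With $\cO(n_G^{k-1})$ queries each costing $\Theta(n_G)$, the total query time becomes $\cO(n_G^{k})$, with no saving; this is exactly the issue the paper raises with Gibney's bound. To get around it, each query has to amortize many $(k-1)$-cliques against one reading of a pattern of length $m$, so that the $m^{1-\varepsilon}$ part dominates the work. The plan is to batch: group the $(k-1)$-cliques into blocks and concatenate their patterns $P_S$, separated by fresh symbols, into one long pattern. The text would be made to accept any one block member, say by padding each side with segments that absorb the other block members. This requires adding $\Theta(\text{block size})$ segments. That is harmless only if $n^{\cO(1)}$ stays small, so the block construction has to keep the number of segments polylogarithmic or constant.

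If batching inside a single pattern is not feasible, I would instead push the savings into the text. The first $k-1-r$ vertices of the clique would be fixed by the pattern and the remaining $r$ vertices encoded by additional $\E(G)$ segments, so a pattern of length $\Theta(n_G)$ certifies $r$ extra vertices. To make the $m^{1-\varepsilon}$ term matter, $m$ itself has to be polynomially larger than the work hidden in the additive term. So the working plan is a generalized pattern encoding $\ell$ vertices with length $m=\Theta(\ell n_G)$, from which one query certifies extension of an $\ell$-clique. Balancing $\ell$ against $k$ and $\varepsilon$ would then give total time $n_G^{k-\delta}$, plus $n_G^{3c}$ for preprocessing. That preprocessing cost is below $n_G^{k}$ once $k>3c$, and choosing $k$ this large finishes the contradiction with the $k$-clique conjecture.
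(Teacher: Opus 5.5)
Your setup in steps one to four is correct. You also diagnose the failure correctly: with $\cO(1)$ segments, each query costs $\Theta(m)=\Theta(n_G)$ because of the additive term, so $n_G^{k-1}$ queries cost $\Theta(n_G^{k})$. However, neither proposed fix closes this gap, and the proof as written has a genuine hole.

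\emph{Batching} several $P_S$ into one long pattern cannot help at all. The additive term is charged on pattern length, and concatenation preserves total length. So the sum of the $+m$ terms over all queries stays $n_G^{k-1}\cdot\Theta(n_G)=\Theta(n_G^{k})$ however you group them.

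Your \emph{second plan} has the same arithmetic. A pattern encoding $\ell$ vertices has $m=\Theta(\ell n_G)$, the text has $\cO(1)$ segments and extends by one vertex, so $n_G^{\ell}$ queries of cost $\Theta(\ell n_G)$ give $\Theta(n_G^{\ell+1})$. That is brute force for $(\ell+1)$-cliques, for every $\ell$, so no $\delta>0$ comes out of ``balancing''. The idea of encoding $r$ extra vertices by additional $\E(G)$ segments is also not supported by the gadget. Splitting $M$ at position $i$ is the only consistency mechanism across segments. It forces one text-chosen vertex $u$ to agree everywhere, and extra $\E(G)$ segments only check $u$ against more pattern-specified vertices. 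Nothing checks adjacency between two text-chosen vertices.

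The missing idea is to let one reading of a length-$\Theta(n_G)$ pattern test polynomially many candidates by enlarging the \emph{text}, and to pay for this with the $n^{\cO(1)}$ factor. Your requirement that the number of segments stay constant or polylogarithmic is exactly backwards. The paper's construction is as follows.
\begin{itemize}
\item For every vertex $v$, it builds $T^{k-1}_{G_v}$ on the neighbourhood graph $G_v$. For a $(k-2)$-clique $S$, $P_S$ occurs in it iff $S\cup\{u,v\}$ is a $k$-clique for some $u$.
\item It concatenates groups of $n_G^{\gamma}$ such gadgets, separated by a fresh delimiter $\$$ so no occurrence crosses gadgets. This gives $n_G^{1-\gamma}$ \gd strings with $\cO(n_G^{\gamma})$ segments each, and each is indexed in polynomial time.
\item A single query with $P_S$ then tests all $n_G^{\gamma}$ choices of $v$ (and all $u$) at cost $\cO(n_G^{\gamma a+1-\varepsilon}+n_G)$.
\end{itemize}
The total query time is $n_G^{k-2}\cdot n_G^{1-\gamma}\cdot(n_G^{\gamma a+1-\varepsilon}+n_G)=\cO(n_G^{k-\varepsilon+\gamma(a-1)}+n_G^{k-\gamma})$. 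Choosing $\gamma=\varepsilon/a$ (with $a\ge 1$) makes both terms $n_G^{k-\varepsilon/a}$. The preprocessing cost is polynomial with an exponent independent of $k$, so it is absorbed by taking $k$ large.
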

\begin{proof}
Assume such a combinatorial algorithm exists with preprocessing $\cO(N^b)$ and query $\cO(n^{d} m^{1-\varepsilon}+m)$ for constants $b,d$ and $\varepsilon>0$. We show how to solve $k$-clique in $\cO(n^{k-\varepsilon})$ time, contradicting the conjecture.

For each vertex $v$, let $G_v$ be the subgraph induced by the neighbors of $v$ (excluding $v$ itself). $G$ has a $k$-clique if and only if some $G_v$ has a $(k-1)$-clique.  
We construct the \gd string $T^{k-1}_{G_v}$ (using the construction above with $k-1$ instead of $k$) for every $v$. We then concatenate these strings in groups of $\lfloor n^{\gamma}\rfloor$ (for some $\gamma>0$ to be determined later), separated by a fresh delimiter $\$$, obtaining $n^{1-\gamma}$ \gd strings each of size $\cO(n^{3+\gamma})$ and length $\cO(n^{\gamma})$. We preprocess each in time $\cO((n^{3+\gamma})^b)=\cO(n^{3b+b\gamma})$, so the total preprocessing time is $\cO(n^{3b+1+\gamma(b-1)})$.

Now take any $(k-2)$-clique $S$ in $G$. We build the pattern $P_S$ as in Corollary~\ref{cor:reduction} (with $k-1$ in place of $k$); its length is $\Theta(n)$. We can now query every preprocessed \gd string with the pattern $P_S$. By Corollary~\ref{cor:reduction}, a query returns YES if and only if there exist vertices $u,v$ such that $S\cup\{u,v\}$ is a $k$-clique in $G$. Each query costs $\cO((n^{\gamma})^{d} n^{1-\varepsilon}+n)=\cO(n^{\gamma d+1-\varepsilon}+n)$.

There are at most $n^{k-2}$ choices for $S$, and each is tested against $n^{1-\gamma}$ indices. Hence, the total time to solve $k$-clique is
$$
\cO\!\left(n^{3b+1+\gamma(b-1)} + n^{k-2}n^{1-\gamma}\bigl(n^{\gamma d+1-\varepsilon}+n\bigr)\right)
= \cO\!\left(n^{3b+1+\gamma(b-1)} + n^{k-\varepsilon+\gamma(d-1)} + n^{k-\gamma}\right).
$$
Set $\gamma = \varepsilon/d$. Then the second and the third terms become $n^{k-\varepsilon/d}$. Because $k$ is constant and $b,d$ are fixed, taking $k$ large enough makes the first term also $\cO(n^{k-\varepsilon})$. Hence we would solve $k$-clique with a combinatorial algorithm in $\cO(n^{k-\varepsilon'})$ time for some $\varepsilon'>0$, contradicting the conjecture.  
\end{proof}

For \ed strings, we use almost the same gadgets as above, but the fact that strings within sets can have different lengths allows us to directly build an index that extends a $(k-2)$-clique with two vertices in a single pattern matching query. The main difference is an ED connector that lets the pattern choose one extension vertex on the left of the connector and another on the right. Add two fresh symbols $\$$ and $?$ to the alphabet, and keep $M$, $e(\cdot,\cdot)$, and $\E(G)$ as above. For each vertex $v_i$ and each edge $(v_i,v_j)$, we define
$$
\tilde{s}(v_i)=\$\,M[1\dots i-1],\qquad
\tilde{t}(v_i)=M[i\dots n]\,\$,\qquad
g(v_i,v_j)=M[i\dots n]\,\$\,?\,\$\,M[1\dots j-1].
$$
Let
$$
\begin{aligned}
\tilde{S}(G)&=\{\tilde{s}(v):v\in V(G)\},\\
\tilde{T}(G)&=\{\tilde{t}(v):v\in V(G)\},\\
\Gamma(G)&=\{g(u,v):(u,v)\in E(G)\}.
\end{aligned}
$$
We construct the ED string
$$
\tilde{T}^{k}_G
=\tilde{S}(G)\cdot \E(G)^{k-2}\cdot \Gamma(G)\cdot \E(G)^{k-2}\cdot \tilde{T}(G).
$$
It has length $\cO(1)$ and size $\cO(n^3)$.

For a $(k-2)$-clique $S=\{v_{i_1},\dots,v_{i_{k-2}}\}$, define
$$
\tilde{P}_S
=\$\,M\,\#\,i_1\,\#\,M\,\#\,i_2\,\#\,\cdots\,\#\,i_{k-2}\,\#\,M\,
\$\,?\,\$\,M\,\#\,i_1\,\#\,M\,\#\,i_2\,\#\,\cdots\,\#\,i_{k-2}\,\#\,M\,\$.
$$
Clearly, $|\tilde{P}_S|=\cO(n)$.

\begin{lemma}[\cite{bringmannDichotomy2016}]\label{lem:ed-reduction}
The pattern $\tilde{P}_S$ occurs in $\tilde{T}^{k}_G$ if and only if there exist two vertices $u,v$ such that $S\cup\{u,v\}$ is a $k$-clique in $G$.
\end{lemma}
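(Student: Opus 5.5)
The plan is a direct character-alignment argument in the style of Bringmann et al., using the rare symbols $\$$ and $?$ to pin the occurrence down and the distinctness of the symbols $1,\dots,n$ to force the choices in each segment. Throughout, $S=\{v_{i_1},\dots,v_{i_{k-2}}\}$ is a $(k-2)$-clique, and $G$ has no self-loops.

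\emph{Alignment.} The symbol $?$ occurs in $\tilde{P}_S$ exactly once. In $\tilde{T}^k_G$ it occurs only in strings of $\Gamma(G)$, and exactly once in each such string. Hence in any occurrence, the $?$ of $\tilde{P}_S$ is aligned with the $?$ of some $g(v_a,v_b)=M[a\dots n]\,\$\,?\,\$\,M[1\dots b-1]$, with $(v_a,v_b)\in E(G)$.

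Consider the pattern's left half $\$\,M\,\#\,i_1\,\#\cdots\#\,i_{k-2}\,\#\,M$, which ends just before the $\$?\$$ block. Reading it backwards from that block:
\begin{itemize}
\item Its last $M$ ends with $M[a\dots n]$ from $g$, so its prefix $M[1\dots a-1]$ must come from the preceding $\E(G)$ segment.
\item Walking leftwards through each $\E(G)$ segment, the chosen string $e(v_x,v_y)=M[x\dots n]\,\#\,y\,\#\,M[1\dots x-1]$ has a suffix $M[1\dots x-1]$ that must complete the $M$ to its right. Since the symbols of $M$ are pairwise distinct, this forces $x=a$.
\item The middle symbol $y$ is then aligned with the corresponding $i_\ell$, so $y=i_\ell$.
\item Since $E(G)$ contains no $\$$ and $\tilde{P}_S$ has exactly $k-2$ occurrences of $\#\,i_\ell\,\#$ on the left, the walk crosses exactly the $k-2$ segments of $\E(G)$.
\item The leading $\$\,M[1\dots a-1]$ must then match $\tilde{s}(v_a)$ in $\tilde{S}(G)$. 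This is consistent, because the only $\$$ to the left of $\Gamma(G)$ is the first character of the strings of $\tilde{S}(G)$.
\end{itemize}
The symmetric argument on the right half, reading forwards from $M[1\dots b-1]$ in $g$, forces each right $\E(G)$ string to be $e(v_b,v_{i_\ell})$ and the final segment to be $\tilde{t}(v_b)$. Thus an occurrence exists only if $v_a$ and $v_b$ are both adjacent to every vertex of $S$ and $(v_a,v_b)\in E(G)$.

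\emph{Completing the forward direction.} Since $G$ has no self-loops, $v_a\notin S$, $v_b\notin S$, and $v_a\neq v_b$. Hence $S\cup\{v_a,v_b\}$ is a $k$-clique.

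\emph{Converse.} Given $u=v_a$ and $v=v_b$ with $S\cup\{u,v\}$ a $k$-clique, choose the following strings:
\begin{itemize}
\item $\tilde{s}(v_a)$ from $\tilde{S}(G)$;
\item $e(v_a,v_{i_\ell})$ for $\ell=1,\dots,k-2$ in the left $\E(G)$ segments;
\item $g(v_a,v_b)$ from $\Gamma(G)$;
\item $e(v_b,v_{i_\ell})$ in the right $\E(G)$ segments;
\item $\tilde{t}(v_b)$ from $\tilde{T}(G)$.
\end{itemize}
All these edges exist because $S\cup\{u,v\}$ is a clique. Concatenating them yields exactly $\tilde{P}_S$, since consecutive pieces glue $M[1\dots a-1]$ with $M[a\dots n]$ (and likewise for $b$). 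So $\tilde{P}_S$ occurs in $\tilde{T}^k_G$, in fact as an entire string of the language.

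\emph{Main obstacle.} The delicate step is making the alignment argument rigorous: one must rule out an occurrence that starts or ends inside a segment rather than at the $\$$ symbols. I would handle this by counting the occurrences of $\$$ and $\#$ on each side of $?$ in both the pattern and the text. The pattern has exactly two $\$$ on each side of its $?$, namely its extreme characters and the ones flanking $?$. In the text, the only $\$$ besides those in $g$ are the first character of $\tilde{S}(G)$ strings and the last character of $\tilde{T}(G)$ strings. This forces the occurrence to span the whole text string, after which the character-by-character matching above is forced.
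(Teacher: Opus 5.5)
Your proof is correct. The paper gives no argument of its own and simply cites Bringmann et al.\ for this lemma; your proof is the standard forced-alignment argument behind that citation. Pinning the unique $?$ and then matching outward character by character, using that the symbols of $M$ are pairwise distinct, forces the choices $\tilde{s}(v_a)$, $e(v_a,v_{i_\ell})$, $g(v_a,v_b)$, $e(v_b,v_{i_\ell})$ and $\tilde{t}(v_b)$.

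Your ``main obstacle'' is in fact already settled by this alignment. Once the $?$ is fixed, the number of pattern characters on each side of it is fixed. The outermost $\$$ symbols of $\tilde{P}_S$ can then only land on the first character of $\tilde{s}(v_a)$ and the last character of $\tilde{t}(v_b)$, so the occurrence must span a whole string of the language.
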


\begin{corollary}\label{cor:ed-reduction}
Let $k$ be a fixed integer. Given a graph $G$ on $n$ vertices, we can construct an ED string $\tilde{T}^{k}_G$ of length $\cO(1)$ and size $\cO(n^3)$ such that for any $(k-2)$-clique $S$ of $G$, we can construct a pattern $\tilde{P}_S$ of length $\cO(n)$ such that $\tilde{P}_S$ occurs in $\tilde{T}^{k}_G$ if and only if there exist two vertices $u,v$ such that $S\cup \{u,v\}$ is a $k$-clique.
\end{corollary}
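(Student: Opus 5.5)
The plan is to derive the corollary directly from Lemma~\ref{lem:ed-reduction}, once we check that the construction has the stated size and structure. The construction is deterministic, so three things need verifying: the length of $\tilde{T}^{k}_G$, the size of $\tilde{T}^{k}_G$, and the length of $\tilde{P}_S$.

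First, the length. $\tilde{T}^{k}_G$ is a concatenation of $1+(k-2)+1+(k-2)+1 = 2k-1$ segments. Since $k$ is fixed, this is $\cO(1)$.

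Second, the size. Count each segment separately.
\begin{itemize}
\item $\tilde{S}(G)$ and $\tilde{T}(G)$ each contain $n$ strings of length at most $n+1$, contributing $\cO(n^2)$ each.
\item $\E(G)$ contains $2|E(G)|=\cO(n^2)$ strings of length $n+3$, so it has size $\cO(n^3)$.
\item $\Gamma(G)$ contains $\cO(n^2)$ strings, each of length $(n-i+1)+3+(j-1)\le 2n+3$, so it also has size $\cO(n^3)$.
\end{itemize}
There are $\cO(1)$ segments in total, so $\|\tilde{T}^{k}_G\|=\cO(n^3)$. The whole construction takes $\cO(n^3)$ time, which is polynomial.

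Third, the pattern. $\tilde{P}_S$ consists of $2(k-1)$ copies of $M$, each of length $n$, plus $\cO(k)$ separator and index symbols. Hence $|\tilde{P}_S|=\cO(kn)=\cO(n)$. It is built directly from the indices of $S$.

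With these checks in place, the equivalence stated in the corollary is exactly Lemma~\ref{lem:ed-reduction}, which we take as given. This is the only nontrivial step. If it had to be justified rather than cited, the obstacle would be the usual Bringmann et al.\ alignment argument. The separators $\$$ and $?$ force the central $\$\,?\,\$$ of $\tilde{P}_S$ to align with the $\$\,?\,\$$ inside some $g(u,v)\in\Gamma(G)$. Each half of the pattern then behaves like the \gd gadget of Lemma~\ref{lem:gd-reduction}, selecting a vertex $u$ on the left and $v$ on the right with edges to every vertex of $S$. The edge $(u,v)$ itself is certified by the choice of $g(u,v)$.
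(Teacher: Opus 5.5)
Your proposal is correct and matches the paper, which gives no separate argument for this corollary. The paper reads it off from Lemma~\ref{lem:ed-reduction} together with the length, size, and pattern-length bounds stated alongside the construction, and your three checks ($2k-1$ segments, size dominated by $\E(G)$ and $\Gamma(G)$, and $2(k-1)$ copies of $M$ in $\tilde{P}_S$) are all accurate. Your aside on the lemma is also sound, and it can be made simpler still: $\tilde{P}_S$ and every string of $\mathcal{L}(\tilde{T}^{k}_G)$ each contain exactly four \$ symbols, with one at each end, so any occurrence must coincide with a whole string of the language.
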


\begin{theorem}
There is no combinatorial algorithm for \ed string indexing that answers queries in $\cO(n^{\cO(1)}m^{2-\varepsilon})$ time after polynomial preprocessing, unless the $k$-clique conjecture fails.
\end{theorem}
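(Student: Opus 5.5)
We argue by contradiction, mirroring the proof for \gd indexing but using Corollary~\ref{cor:ed-reduction}, so that a single query extends a $(k-2)$-clique by \emph{two} vertices. Suppose a combinatorial index exists with preprocessing time $\cO(N^b)$ and query time $\cO(n^{d}m^{2-\varepsilon})$ for constants $b,d$ and some $\varepsilon>0$. We will use it to solve $k$-clique combinatorially in $\cO(n^{k-\varepsilon'})$ time for some $\varepsilon'>0$.

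\textbf{Preprocessing.} Given $G$ on $n$ vertices, build the \ed string $\tilde{T}^{k}_G$ of Corollary~\ref{cor:ed-reduction}. Its length is $\cO(1)$ and its size is $\cO(n^3)$. The construction is purely combinatorial, since it just writes out the gadgets. Preprocess it with the assumed index in $\cO(n^{3b})$ time. No grouping trick is needed here: a single text already carries the whole graph, and the number of segments of this text is constant. The factor $n^{d}$ in the query bound refers to the number of segments of the \ed string, which is $\cO(1)$, while $n$ now denotes the number of vertices of $G$. This is the key difference from the \gd case, where we had to batch the graphs $G_v$ and balance with $\gamma$.

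\textbf{Queries.} Enumerate all $(k-2)$-cliques $S$ of $G$; there are $\cO(n^{k-2})$ of them, listable in $\cO(n^{k-2})$ time for constant $k$. For each $S$, build $\tilde{P}_S$ of length $m=\cO(n)$ and query the index. By Corollary~\ref{cor:ed-reduction}, some query answers YES if and only if $G$ contains a $k$-clique, because every $k$-clique contains a $(k-2)$-clique $S$ together with two further vertices $u,v$ such that $S\cup\{u,v\}$ is a clique. Each query costs $\cO(1\cdot n^{2-\varepsilon})$, so all queries together take $\cO(n^{k-2}\cdot n^{2-\varepsilon})=\cO(n^{k-\varepsilon})$ time. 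The total running time is therefore $\cO(n^{3b}+n^{k-\varepsilon})$.

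\textbf{Choosing $k$.} Since $b$ is a fixed constant, choose $k>3b+\varepsilon$, for instance $k\ge 3b+1$. Then the preprocessing term is also $\cO(n^{k-\varepsilon})$, and we obtain a combinatorial $\cO(n^{k-\varepsilon})$ algorithm for $k$-clique. This contradicts the conjecture. The contradiction is valid because the conjecture is stated for every fixed $k$, so refuting it for one sufficiently large $k$ suffices.

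\textbf{Main obstacle.} The only delicate point is the correctness of Lemma~\ref{lem:ed-reduction}, in particular that the $\Gamma(G)$ connector $g(u,v)$ and the $\$$ and $?$ delimiters force the pattern to align with $\tilde{S}(G)$ on the left and $\tilde{T}(G)$ on the right. Here $\tilde{P}_S$ has exactly one $?$ and its first and last symbols are $\$$, which pins the alignment. The edge gadgets then certify adjacency of $u$ with each $i_j$ on the left half and of $v$ with each $i_j$ on the right half, and $g(u,v)$ certifies the edge $uv$. Since that lemma is taken as given, the remaining work is the parameter bookkeeping above. We also need to check that the reduction stays combinatorial, i.e.\ uses no fast matrix multiplication, which is clear.
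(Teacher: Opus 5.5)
Your proof is correct and follows the paper's argument essentially step for step: you build a single $\cO(1)$-segment text $\tilde{T}^{k}_G$ without any grouping, preprocess it in $\cO(n^{3b})$ time, issue one query costing $\cO(n^{2-\varepsilon})$ per $(k-2)$-clique, and choose $k>3b+\varepsilon$ so the preprocessing term is absorbed. (Your example choice $k\ge 3b+1$ implicitly assumes $\varepsilon\le 1$; this is harmless, since you may always decrease $\varepsilon$.)
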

\begin{proof}
Assume a combinatorial algorithm for \ed string indexing with preprocessing time $\cO(N^b)$ and query time $\cO(n^{d}m^{2-\varepsilon})$ for fixed constants $b, d$, and $\varepsilon>0$.

We construct $\tilde{T}^{k}_G$ of length $\cO(1)$ and size $\cO(n^3)$ as in Corollary~\ref{cor:ed-reduction}, and preprocess it in $\cO(n^{3b})$ time to answer pattern matching queries for patterns of length $m$ in $\cO(n^{2-\varepsilon})$ time. Given a $(k-2)$-clique $S$, we construct the pattern $\tilde{P}_S$ as in Corollary~\ref{cor:ed-reduction} and query the \ed string index with $\tilde{P}_S$. By Corollary~\ref{cor:ed-reduction}, $\tilde{P}_S$ has length $\cO(n)$, and the query returns YES if and only if there exist two vertices $u,v$ such that $S\cup \{u,v\}$ is a $k$-clique. Each query takes $\cO(n^{2-\varepsilon})$ time. Since we need at most $\cO(n^{k-2})$ queries to solve $k$-clique, the total time is
$$\cO(n^{3b} + n^{k-2} n^{2-\varepsilon}) = \cO(n^{3b} +  n^{k-\varepsilon}).$$
For sufficiently large constant $k>3b+\varepsilon$, this is $\cO(n^{k-\varepsilon})$, contradicting the $k$-clique conjecture.
\end{proof}

\subsection{Active-prefix problem}

Let us consider the following problem:

\probdef{Active-prefix (AP)}{A string $P$ of length $m$, a bit vector $U$ of length $m$, and a set $S$ of strings of total size $N$}{Compute a bit vector $V$ of length $m$ such that $V[j]=1$ if and only if there exist $i\le j$ and $s\in S$ such that $U[i]=1$ and $P[1\dots i]\cdot s=P[1\dots j]$.}

This problem is often considered the main bottleneck for elastic-degenerate string matching, as every existing algorithm computes partial matches from left to right and extends them upon processing the segments of the elastic-degenerate text~\cite{grossiOnline2017,aoyamaFaster2018,bernardiniEven2019,asconeUnifying2024}. We consider the Online Boolean Matrix-Vector multiplication (OMv) conjecture~\cite{henzingerUnifying2015}, and show that active-prefix queries cannot be answered in $\cO(m^{2-\varepsilon})$ time for any $\varepsilon>0$ after polynomial-time preprocessing of the string set and the pattern, unless the OMv conjecture is false. This suggests that a query time of $\cO(n^{\cO(1)}m^{2-\varepsilon})$ for some $\varepsilon>0$ is unlikely to be achieved for elastic-degenerate string indexing, unless this comes from a radically different approach than the existing ones. In particular, the $\cO(nm^2)$-time index from~\cite{gibneyEfficient2020} is likely to be optimal, unless one can improve upon it with a non-combinatorial algorithm without relying on the AP problem.

We consider the following problem:

\indexdef{Online Boolean Matrix-Vector multiplication (OMv)}{A Boolean matrix $M$ of size $n\times n$}{Given a sequence of vectors $v_1\dots v_{n}$, after receiving $v_i$, return $M v_i$ before receiving $v_{i+1}$.}

\begin{conjecture}[\cite{henzingerUnifying2015}]
For any constant $\varepsilon >0$, there is no algorithm for OMv that answers all queries in $\cO(n^{3-\varepsilon})$ time, even after arbitrary polynomial-time preprocessing of $M$.
\end{conjecture}

The reduction is based on that of Bernardini et al.~\cite{bernardiniEven2019}, which reduces Boolean matrix multiplication to AP and gives a combinatorial lower bound for the latter problem. We observe that the same construction can be interpreted in the indexed setting: after preprocessing the set of strings encoding the matrix, each active-prefix query corresponds to an online vector query. This yields a lower bound for indexed AP under the OMv conjecture.

\begin{theorem}\label{thm:ap-omv}
Assuming the OMv conjecture, one cannot preprocess a set of strings $S$ of total size $N$ and a pattern $P$ of length $m$ in polynomial time in $N+m$ and answer active-prefix queries for an input vector $U$ in $\cO(m^{2-\varepsilon})$ time for any $\varepsilon>0$.
\end{theorem}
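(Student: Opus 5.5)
We reduce OMv to indexed active-prefix queries, reinterpreting the Boolean-matrix-multiplication construction of Bernardini et al.~\cite{bernardiniEven2019} in the online setting. Given the $n\times n$ OMv matrix $M$, we build a pattern $P$ of length $m=\Theta(n)$ and a string set $S$ of polynomial size encoding $M$, both depending only on $M$. Each query vector $v$ is then encoded as a bit vector $U$ of length $m$, and $Mv$ is read off from the answer $V$.

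\textbf{Construction.} Work over the alphabet $\{1,\dots,n,\#\}$. Take $P$ of length $m=\Theta(n)$ consisting of two blocks separated by a delimiter, a left block $X=x_1\cdots x_n$ and a right block $Y=y_1\cdots y_n$, where each $x_j$ and $y_i$ is a distinct symbol. Take $S=\{\,x_{j+1}\cdots x_n\,\#\,y_1\cdots y_i : M[i][j]=1\,\}$. The total size of $S$ is $N=\cO(n^3)$, which is polynomial, so preprocessing $(S,P)$ in time polynomial in $N+m$ is polynomial in $n$.

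\textbf{Correctness.} For a query $v$, set $U[j]=1$ exactly at the position ending $x_j$ whenever $v_j=1$, and leave all other positions at $0$. Since the symbols are distinct, a string of $S$ starting right after position $i'$ of $P$ can only match if $i'$ ends some $x_j$. In that case the string must be $x_{j+1}\cdots x_n\#y_1\cdots y_i$ for some $i$, and it ends at the position of $y_i$. Hence $V$ at the position of $y_i$ equals $\bigvee_j M[i][j]\,v_j=(Mv)_i$. I must make sure $S$ contains no spurious matches: the $\#$ pins each string to the block boundary, and distinctness fixes both $j$ and $i$. The AP definition requires $i\le j$, which holds since starts lie in $X$ and ends in $Y$.

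\textbf{Complexity.} Answering the $n$ OMv queries costs $n$ AP queries of $\cO(m^{2-\varepsilon})=\cO(n^{2-\varepsilon})$ each, plus $\cO(n)$ to write each $U$ and read each answer. The total is $\cO(n^{3-\varepsilon})$ after polynomial preprocessing, contradicting the OMv conjecture.

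\textbf{Main obstacle.} The hardest step is keeping $m$ linear in $n$ while letting $S$ absorb the quadratic information; the above handles this by putting $M$ entirely into $S$. I would also check two points against the AP definition. First, prefix-closure: a match ending at $y_i$ must come only from the intended strings. Second, the alphabet size $n+1$ should be acceptable, since the statement places no alphabet restriction. If a constant alphabet were desired, I would encode each symbol in binary with $\cO(\log n)$ blowup and absorb the polylogarithmic factor into $\varepsilon$.
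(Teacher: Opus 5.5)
Your proof is correct and is essentially the paper's reduction: the paper takes $P=a^nba^n$, $s_{i,j}=a^{n-j}ba^i$ for each entry $M[i,j]=1$, $U=v\cdot 0^{n+1}$, and reads $(Mv)[i]$ off $V[n+1+i]$, which is exactly your construction with every $x_j$ and $y_i$ replaced by $a$ and $\#$ replaced by $b$. This also shows that the distinct symbols, and your binary-encoding remark, are unnecessary: the single delimiter already pins the alignment, and the lengths on either side of it determine $j$ and $i$, so a binary alphabet works directly.
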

\begin{proof}

Let $M$ be an $n\times n$ Boolean matrix.

Let the alphabet be $\{a,b\}$ and set $P=a^n b a^n$. For every entry $M[i,j]=1$, add the string $s_{i,j}=a^{n-j}ba^i$ to $S$. Then $|P|=2n+1=\Theta(n)$, and $||S||\le n^2(2n+1)=\cO(n^3)$. Thus $S$ can be constructed and preprocessed in polynomial time in the
size of $M$.

Now consider an online vector $v\in\{0,1\}^n$. We build the AP bit vector $U=v\cdot 0^{n+1}$, so $U[k]=v[k]$ for $1\le k\le n$, and $U[k]=0$ for $n<k\le 2n+1$. Let $V$ be the answer to the AP query $(P,U)$.

We claim that for every $i\in\{1,\dots,n\}$,
$$
V[n+1+i]=1
\quad\Longleftrightarrow\quad
(Mv)[i]=1 .
$$
Indeed, for $k,i,j\in\{1,\dots,n\}$,
$$
P[1\dots k]\cdot s_{i,j}
=a^k a^{n-j}ba^i
=a^{k+n-j}ba^i .
$$
This is a prefix of $P=a^n b a^n$ if and only if $k=j$, and then it is
exactly $P[1\dots n+1+i]$. Therefore $V[n+1+i]=1$ if and only if there
exists $j$ such that $U[j]=1$ and $M[i,j]=1$, which is precisely
$(Mv)[i]=1$.

Consequently, one AP query computes the matrix-vector product for one
online vector. If AP queries could be answered in
$\cO(m^{2-\varepsilon})$ time after polynomial preprocessing of $S$ and $P$,
then any sequence of online vectors could be processed by issuing one AP
query per vector. In particular, the total query time would be $\cO(n^{3-\varepsilon})$. This
contradicts the OMv conjecture.
\end{proof}

\section{Conclusion and future work}

We studied the problem of pattern matching in generalized degenerate strings, and we gave an algorithm running in $\tcO(N\sqrt{m})$ time, which is the first classical subquadratic-time algorithm for this problem. We also gave a combinatorial lower bound for indexing generalized degenerate strings (resp. elastic-degenerate strings), showing that no combinatorial algorithm can answer queries in $\cO(n^{\cO(1)}m^{1-\varepsilon}+m)$ time (resp. $\cO(n^{\cO(1)}m^{2-\varepsilon})$) after polynomial preprocessing, unless the $k$-clique conjecture fails. We also showed that active-prefix queries cannot be answered in $\cO(m^{2-\varepsilon})$ time after preprocessing the string set and pattern in polynomial time, unless the OMv conjecture fails. We leave open the following questions:

\begin{itemize}
   \item Is the $\tcO(N\sqrt{m})$-time algorithm for generalized degenerate string matching optimal? Can we improve it to near-linear time, or prove a matching lower bound? Another natural target is to match the running time of the quantum algorithm of Equi et al.~\cite{equiQuantum2026}, namely $\tcO(\sqrt{Nmn})$ time.
   \item Could we extend our lower bounds to general algorithms, not only combinatorial ones?
   \item A symmetric taxonomy was considered for the graph variant of each type of variable string~\cite{asconeUnifying2024}, where adjacent segments in the text are connected by edges and where concatenations are allowed only along paths in the graph. In particular, the same gap was observed: it is not known whether one can solve pattern matching in \emph{founder graphs}, the graph variant of generalized degenerate strings, in subquadratic time. Can we adapt our techniques to this setting?
\end{itemize}

\section*{Acknowledgments}
I thank Itai Boneh and Paweł Gawrychowski for their helpful comments and suggestions.

\printbibliography
\end{document}